\newcommand{\dsem}[1]{[\![{#1}]\!]}
\newcommand{\lsem}{[\![}
\newcommand{\rsem}{]\!]}
\newcommand{\comp}[2]{#1\raisebox{.3ex}{\tiny o}#2}  
\newcommand{\type}{\!:\!}
\newcommand{\mystop}{\mathtt{Stop}}
\newcommand{\mypar}{\mid\mid} 
\newcommand{\myinterleave}{\mid\mid\mid} 
\newcommand{\prog}[1]{\mathtt{#1}\,}
\newcommand{\Th}{\mathrm{Th}}
\newcommand{\myaction}{\rightarrow}
\newcommand{\myactionsub}{\rightarrow}
\newcommand{\boxaction}{}
\newcommand{\trF}{\mathrm{tr}_{F}}
\newcommand{\fut}{\mathrm{fut}}
\newcommand{\futF}{\mathrm{fut}_{F}}
\newcommand{\futP}{\mathrm{fut}_{P}}
\newcommand{{\finitary}}{finitary}
\newcommand{\T}{{\cal T}}
\newcommand{\R}{{\cal R}}
\newcommand{\F}{{\cal F}}
\newcommand{\fR}{{{\cal R}\!_f}}
\newcommand{\fF}{{{\cal F}\!\!\!_f}}
\newcommand{\dT}{{{\cal T}_d}}
\newcommand{\dR}{{{\cal R}_d}}
\newcommand{\dF}{{{\cal F}\!\!_d}}
\newcommand{\dfR}{{{\cal R}_{\it df}}}
\newcommand{\dfF}{{{\cal F}\!\!_{\it df}}}
\newcommand{\TX}{{{\cal T}(X)}}
\newcommand{\RX}{{{\cal R}(X)}}
\newcommand{\FX}{{{\cal F}(X)}}
\newcommand{\TXY}{{{\cal T}(X,Y)}}
\newcommand{\RXY}{{{\cal R}(X,Y)}}
\newcommand{\dTX}{{{\cal T}_d(X)}}
\newcommand{\dRX}{{{\cal R}_d(X)}}
\newcommand{\dFX}{{{\cal F}\!\!_d(X)}}
\newcommand{\dfFX}{{{\cal F}\!\!_{\it df}(X)}}
\newcommand{\dfTXY}{{{\cal T}_{\it df}(X,Y)}}
\newcommand{\dfRXY}{{{\cal R}_{\it df}(X,Y)}}
\begin{document}

\title*{On CSP and the Algebraic Theory of Effects}
\author{Rob van Glabbeek and Gordon Plotkin\thanks{This work was done
    with the support of a Royal Society-Wolfson Award.}}
\institute{Rob van Glabbeek \at NICTA, Sydney, Australia\\
  University of New South Wales, Sydney, Australia
\and Gordon Plotkin \at Laboratory for the Foundations of Computer Science,
     School of Informatics, University of Edinburgh, UK}
\maketitle
\thispagestyle{empty}

\abstract*{We consider CSP from the point of view of the algebraic
theory of effects, which classifies operations as effect
\emph{constructors} or effect \emph{deconstructors}; it also
provides a link with functional programming, being a refinement of
Moggi's seminal monadic point of view. There is a natural algebraic 
theory of the constructors whose free algebra functor
is Moggi's monad;  we illustrate this by characterising free and
initial algebras in terms of two versions of the stable failures
model of CSP, one more general than the other. Deconstructors are
dealt with as homomorphisms to  (possibly non-free) algebras.
   One can view CSP's action and choice operators as
constructors and the rest, such as  concealment and concurrency, as
deconstructors. Carrying this programme out results in taking
deterministic external choice as constructor rather than general
external choice. However, binary deconstructors, such as the CSP
concurrency operator, provide unresolved difficulties.  We conclude
by presenting a combination of CSP with Moggi's computational
$\lambda$-calculus, in which the operators, including concurrency,
are polymorphic. While the paper mainly concerns CSP, it ought to be
possible to carry over similar ideas to other process calculi.
}

\abstract{We consider CSP from the point of view of the algebraic
theory of effects, which classifies operations as effect
\emph{constructors} and effect \emph{deconstructors}; it also
provides a link with functional programming, being a refinement of
Moggi's seminal monadic point of view. There is a natural algebraic 
theory of the constructors whose free algebra functor
is Moggi's monad;  we illustrate this by characterising free and
initial algebras in terms of two versions of the stable failures
model of CSP, one more general than the other. Deconstructors are
dealt with as homomorphisms to  (possibly non-free) algebras.
\newline\indent
One can view CSP's action and choice operators as
constructors and the rest, such as  concealment and concurrency, as
deconstructors. Carrying this programme out results in taking
deterministic external choice as constructor rather than general
external choice. However, binary deconstructors, such as the CSP
concurrency operator, provide unresolved difficulties.  We conclude
by presenting a combination of CSP with Moggi's computational
$\lambda$-calculus, in which the operators, including concurrency,
are polymorphic. While the paper mainly concerns CSP, it ought to be
possible to carry over similar ideas to other process calculi.
}

\section{Introduction}

We examine Hoare's CSP~\cite{BHR84,Hoa85,Ros98} from the point of view of the algebraic theory of effects~\cite{PP02,PP04,HPP06,PPr09},  a refinement of Moggi's  seminal `monads as notions of computation'~\cite{Mog89,Mog91,BHM02}. This is a natural exercise as the  algebraic nature of both points to a possibility of commonality. 
In the algebraic theory of effects operations do not all have the same character. Some are effect \emph{constructors}: they create the effects at hand; some are effect \emph{deconstructors}: they respond to effects created. For example, raising an exception creates an effect---the exception raised---whereas exception-handling responds to effects---exceptions that have been raised. It may therefore be interesting, and even useful,  to classify CSP operators as constructors or deconstructors. Considering CSP and the algebraic theory of effects together also raises the possibility of combining CSP with functional programming in a principled way, as Moggi's monadic approach provides a framework for the combination of computational effects with functional programming. More generally, although we mainly consider CSP, a similar exercise could be undertaken for other process calculi as they have a broadly similar algebraic character.

The theory of algebraic effects starts with the observation that effect constructors generally satisfy natural equations, and Moggi's monad $T$ is precisely
the free algebra monad for these equations (an exception is the continuations monad which is of a different character). 
Effect deconstructors are treated as homomorphisms from the free algebra to another algebra, perhaps with the same carrier as the free algebra but with different operations. These operations can be given by combinations of effect constructors and previously defined deconstructors. The situation is much like that of primitive recursive definitions, although we will not present a formal definitional scheme.

We mainly consider that part of CSP containing action, internal and
external choice, deadlock, relabelling, concealment, concurrency and
interleaving, but not, for example,
recursion (we do, albeit briefly, consider the extension with termination and sequencing). The evident constructors are then action prefix, and the two kinds of choice, internal and external, the latter together with deadlock. The evident deconstructors are relabelling, concealment, concurrency and interleaving.  There is, however, a fly in the ointment, as pointed out in~\cite{PPr09}.  Parallel operators, such as CSP's concurrency and interleaving,  are naturally binary, and respond to effects in both arguments. However, the homomorphic approach to deconstructors, as sketched above, applies only to unary deconstructors, although it is possible to extend it to accommodate parameters and simultaneous definitions. Nonetheless, the natural definitions of concurrency and interleaving do not fall within the homomorphic approach, even in the extended sense. This problem has
nothing to do with CSP: it  applies to all examples of parallelism of which we are aware.

Even worse, when we try to carry out the above analysis for CSP, it seems that the homomorphic approach cannot handle concealment. The difficulty is caused by the fact that concealment does not commute with external choice. Fortunately this difficulty can be overcome by changing the effect constructors: we remove external choice and action prefix and replace them by the deterministic external choice operator 
$(a_{1} \rightarrow P(a_{1}) \mid \ldots \mid a_{n} \rightarrow P(a_{n}))$, where
the $a_{i}$ are all different. Binary external choice then becomes a deconstructor.

 With that we can carry out the program of analysis, finding only the expected difficulty in dealing with concurrency and interleaving. However, it must be admitted that the $n$-ary operators are somewhat clumsy to work with, and it is at least a priori odd to take binary external choice as a deconstructor. On the other hand, in~\cite[Section 1.1.3]{Hoa85} Hoare writes:
 \begin{quote}
 The definition of choice can readily be extended to more than two alternatives, e.g.,
                   \[ (x \rightarrow P \mid y \rightarrow Q \mid \ldots \mid z \rightarrow R)\]
    Note that the choice symbol
    $\mid$ is \emph{not} an operator on processes; it
    would be syntactically incorrect to write  $P\mid Q$, for processes $P$
    and $Q$. The reason for this rule is that we want to avoid giving
    a meaning to
                           \[(x  \rightarrow P \mid x \rightarrow Q)\]
    which appears to offer a choice of first event, but actually fails to do so.
 \end{quote}
 which might be read as offering some support to a treatment which  takes deterministic external choice as a primitive (here = constructor), rather than general external choice.
  On our side, we count it as a strength of the algebraic theory of effects that it classifies effect-specific operations and places constraints on them: that they either belong to the basic theory or must be defined according to a scheme that admits inductive proofs.

Turning to the combination with functional programming, consider  Moggi's computational $\lambda$-calculus. Just as one accommodates imperative programming\linebreak within functional programming by treating commands as expressions of type $\prog{unit}$, so it is natural to treat our selection of CSP terms as expressions of type $\prog{empty}$ as they do not terminate normally, only in deadlock. For process languages such as ACP~\cite{BK85,BK86} which do have the possibility of normal termination, or CSP  with such a termination construct, one switches to regarding process terms as expressions of type $\prog{unit}$, when a sequencing operator is also available.

As we have constructors for every $T(X)$, it is natural to treat them  as polymorphic constructs, rather than just as process combinators. For example, one could have a binary construction for internal choice, with typing rule:
\[\frac{M \type \sigma \quad\quad  N \type \sigma}{M \sqcap N \type \sigma}\]
It is  natural to continue this theme for the deconstructors, as in:
\[\frac{M \type \sigma}{M\backslash a \type \sigma}
\quad\quad\quad
\frac{M \type \sigma \quad\quad N \type \tau}{M \!\mypar\! N  \type \sigma\times\tau}\]
where the thought behind the last rule is that $M$ and $N$ are evaluated concurrently, terminating normally only if they both do, when the pair of results returned individually by each is returned. 

{In the case of CSP a functional programming language CSPM incorporating CSP processes has been given by Scattergood~\cite{Sca}; it is used by most existing CSP tools including the Failures Divergences Refinement Checker (FDR), see~\cite{Ros94}.}
Scattergood's CPSM differs from our proposal in several
respects. Most significantly, processes are not treated on a par with other expressions: in particular they cannot be taken as arguments in functions, and CSP constructors and deconstructors are only available for processes. It remains to be seen if such differences are of practical relevance.

In Section~\ref{CSPconstructors1} we  take deadlock, action, binary internal and external choice as the constructors. We show, in
Theorem~\ref{Initial1}, that, with the standard equational theory, the
initial algebra is the `{\finitary} part'  of the original
Brookes-Hoare-Roscoe failures model~\cite{BHR84}; which is known to be
isomorphic to the {\finitary}, divergence- and $\checkmark$-free part of the
failures/divergences model, {as well as
the {\finitary}, divergence- and $\checkmark$-free part of the
stable failures model, both of which are described in \cite{Ros98}.}
In Section~\ref{Deconstructors} we go on to consider effect deconstructors, arriving at the difficulty with concealment and illustrating the problems with parallel operators in the (simpler) context of Milner's synchronisation trees. A reader interested in the problem of dealing with parallel operators algebraically need only read this part, together with~\cite{PPr09}.

We then backtrack in Section~\ref{CSPconstructors2},  making a different choice of constructors, as discussed above, and giving another characterisation of the {\finitary} failures model as an initial algebra in Theorem~\ref{Initial2}. With that, we can carry out our programme, failing only where expected: with the binary deconstructors.
In Section~\ref{divergence} we add a zero for the internal choice
operator to our algebra; this can be interpreted as divergence in the
stable failures model, and permits the introduction of a useful
additional deterministic external choice constructor. Armed with this tool,
in Section~\ref{Functional}, we look at the combination of CSP and functional programming, following the lines hinted at above. In order to give a denotational semantics we need, in Theorem~\ref{Free}, to characterise the free algebras rather than just the initial one.  

As remarked above, termination and sequencing are accommodated within functional programming  via the type 
$\prog{unit}$; {in Section~\ref{termination} }we therefore also give a brief treatment of our
fragment of CSP extended with termination and sequencing, modelling it
in the free algebra over the one-point set. 

The concluding Section~\ref{the-end} contains a brief discussion of the general question of combining process calculi, or parallelism with a global store, with functional programming. The case of CSP considered here is just one example of the many possible such combinations. 
Throughout this paper we do not consider recursion;
this enables us to work within the category of sets. A more complete treatment would deal with recursion working within, say, the category of $\omega$-cpos (i.e., partial orders with lubs of increasing $\omega$-sequences) and continuous functions (i.e., monotone functions preserving lubs of increasing $\omega$-sequences). This is discussed further in Section~\ref{the-end}.
The appendix gives a short presentation of Moggi's computational $\lambda$-calculus.

\section{Technical preliminaries} \label{Prelim}

We give a brief sketch of finitary equational theories and their free algebra monads. For a fuller explanation see, e.g.,~\cite{Bor94,AGM95}.
Finitary equational theories $\Th$ are derived from a given set of axioms, written
using a signature $\Sigma$ consisting of a set of operation symbols $\mathrm{op}: n$, together with their arities $n \geq 0$. One forms terms $t$ from the signature and variables and the axioms then consist of equations $t = u$ between the terms; there is a natural equational logic for deducing consequences of the axioms;
and the theory consists of all the equations derivable from the axioms.  A \emph{ground} equation is one where both terms are \emph{closed}, meaning that they contain no variables.

For example, we might consider the fragment of CSP
with signature $\boxempty \type 2$, $\mystop\type0$ and the following axioms for a semilattice (the first three axioms) with a zero (the last):
	\[\begin{array}{lc}
	\mbox{Associativity} & (x \boxempty  y) \boxempty  z = x \boxempty  (y \boxempty  z)\\
	\mbox{Commutativity} & x \boxempty  y = y \boxempty  x\\
	\mbox{Idempotence}    & x \boxempty  x = x\\
	\mbox{Zero}          &  x \boxempty  {\mystop}   = x
	\end{array}\]
A \emph{$\Sigma$-algebra} is a structure $\mathcal{A} = (X, (\mathrm{op}_{\mathcal{A}}\type X^{n} \rightarrow X)_{\mathrm{op}: n \in \Sigma})$; we say that $X$ is the \emph{carrier} of $\mathcal{A}$ and the $\mathrm{op}_{\mathcal{A}}$ are its \emph{operations}. We may omit the subscript on operations when the algebra is understood. When we are thinking of an algebra as an algebra of processes, we may say `operator' rather than `operation.'
A homomorphism between two algebras is a map between their carriers respecting their operations; we  therefore have a category of $\Sigma$-algebras.

Given such a $\Sigma$-algebra, every term $t$ has a  \emph{denotation} $\dsem{t}(\rho)$, an element of the carrier, given an assignment $\rho$ of elements of the carrier to every variable; we  often confuse terms with their denotation. The algebra \emph{satisfies} an equation $t = u$ if $t$ and $u$ have the same denotation for every such assignment. If $\mathcal{A}$ satisfies all the axioms of a theory $\Th$, it is called a $\Th$-algebra; the  $\Th$-algebras form a subcategory of the category of $\Sigma$-algebras.  Any equation provable from the axioms of a theory $\Th$ is satisfied by any $\Th$-algebra. We say that a theory $\Th$ is \emph{(ground) equationally complete} with respect to a $\Th$-algebra if a (ground) equation is provable from $\Th$ if, and only if, it is  satisfied by the $\Th$-algebra.

Any finitary equational theory $\Th$ determines a free algebra monad $T_{\Th}$ on the category of sets, as well as  operations
	\[\mathrm{op}_{X}\type T_{\Th}(X)^{n} \rightarrow T_{\Th}(X)\]
	for any set $X$ and $\mathrm{op}: n \in \Sigma$, such that
	 $(T_{\Th}(X), (\mathrm{op}_{X}\type X^{n} \rightarrow X)_{\mathrm{op}: n \in \Sigma})$ is the  free $\Th$-algebra  over  $X$. Although	$T_{\Th}(X)$ is officially just a set, the carrier of the free algebra, we may also use $T_{\Th}(X)$ to denote the free algebra 
itself.
In the above example the monad is the finite powerset monad:
\[{\cal F}(X) = \{ u \subseteq X \mid u \mbox{
 is finite}\}\]
with $\boxempty_{X}$ and $\mystop_{X}$ being union and the empty set, respectively.

 \section{A first attempt at analysing CSP} \label{CSPconstructors1}

We consider the fragment of CSP with deadlock, action prefix, internal and external choice, relabelling and concealment, and concurrency and interleaving. Working over a fixed alphabet $A$ of \emph{actions}, we consider the following operation symbols: \vspace{1em}

{\bf Deadlock}
\[\mystop \type 0\]

{\bf Action}
\[a\myactionsub - \type 1 \quad\quad (a \in A)\]

{\bf Internal and External Choice}
\[\sqcap, \boxempty \type 2\]

{\bf Relabelling and Concealment}
\[f(-), - \backslash a \type 1\]
for any \emph{relabelling function} $f\type A \rightarrow A$ and action $a$. If $A$ is infinite, this makes the syntax infinitary; as that causes us no problems,  we do not avoid it.\vspace{1em}

{\bf Concurrency and Interleaving}
\[\mypar, \myinterleave \; \type 2\]

The signature of our (first) equational theory
$\mathrm{CSP}(\boxempty)$ for CSP only has operation symbols for the subset of
these  operators which are naturally thought of as constructors, namely
deadlock, action and internal and external choice. Its axioms are those given by de Nicola in~\cite{DeN85}. They are largely very natural and modular, and are as follows:
\begin{itemize}
		\item $\boxempty,\mystop$ is a semilattice with a zero
		(i.e., the above axioms for a semilattice with a zero).
		\item $\sqcap$ is a semilattice (i.e., the axioms
		stating the associativity, commutativity and
		 {idempotence} of $\sqcap$).
		\item  $\boxempty$  and $\sqcap$ distribute over each other:
		  \[x  \boxempty (y \sqcap z) =  (x \boxempty y) \sqcap (x  \boxempty z)
		  \quad\quad x \sqcap (y \boxempty z) =  (x \sqcap y) \boxempty (x \sqcap z)\]
		\item Actions distribute over $\sqcap$:
		\[a\myaction (x \sqcap y) = a\myaction x \sqcap a\myaction  y\]
		 and:
		\[a\myaction x \boxempty a\myaction y = a\myaction x \sqcap a\myaction y\]  
	\end{itemize}
All these axioms are mathematically natural except
the last which involves a  relationship between three different operators. 

We adopt some useful standard notational abbreviations. For $n \geq 1$ we write $\bigsqcap_{i= 1}^{n} t_{i}$ to abbreviate $t_{1} \sqcap \ldots \sqcap t_{n}$, intending $t_{1}$ when $n = 1$. We  assume that parentheses associate to the left; however as $\sqcap$ is associative, the choice does not matter. As $\sqcap$ is a semilattice, we can even index over nonempty finite sets, as in $\bigsqcap_{i \in I} t_{i}$, assuming some standard ordering of the $t_{i}$ without repetitions. As $\boxempty$ is a semilattice with a zero, we can adopt analogous notations $\bigbox_{i= 1}^{n} t_{i}$ and $\bigbox_{i \in I} t_{i}$ but now also allowing $n$ to be $0$ and $I$ to be $\emptyset$.

As $\sqcap$ is a semilattice we can define a partial order for which it is the greatest lower bound by writing $t \sqsubseteq u$ as an abbreviation for $t \sqcap u = t$; then, as $\boxempty$ distributes over $\sqcap$, it is monotone with respect to $\sqsubseteq$: that is, if $x \sqsubseteq x'$ and $y \sqsubseteq y'$ then $x \boxempty y \sqsubseteq x' \boxempty y'$. (We mean all this in a formal sense, for example, that if $t \sqsubseteq u$ and $u \sqsubseteq v$ are provable, so is $t \sqsubseteq v$, etc.)
We note the following, which is equivalent to the distributivity of $\sqcap$ over $\boxempty$, given  that $\sqcap$ and $\boxempty$ are semilattices, and the other distributivity, that  $\boxempty$ distributes over $\sqcap$:
\begin{equation} \label{helpful} x \sqcap (y \boxempty z)  =  x \sqcap (y \boxempty z)  \sqcap (x \boxempty y)
\end{equation}
The equation can also be written as $ x \sqcap (y \boxempty z)  \sqsubseteq (x \boxempty y)$.
Using this one can derive another helpful equation:
\begin{equation} \label{also helpful}
 (x \boxempty a\myaction z) \sqcap (y \boxempty a \myaction w)  =
 (x \boxempty a\myaction (z\sqcap w)) \sqcap (y \boxempty a \myaction (z\sqcap w))
\end{equation}

We next rehearse the original refusal sets model of CSP,
restricted to finite processes without divergence; this
provides a convenient context for identifying the initial model of
$\mathrm{CSP}(\boxempty)$ in terms of failures.

 A \emph{failure (pair)} is a pair $(w,W)$ with $w \in A^*$ and $W \subseteq_{\mathrm{fin}}A$. For every set $F$ of failure pairs, we define its set of \emph{traces} to be
\[\trF = \{w \mid \, (w,\emptyset) \in F\}\] 
and for every $w \in \trF$ we define its set of \emph{futures} to be: 
\[\futF(w) =  \{a \mid wa \in \trF\}\]
With that a \emph{refusal set} $F$ (aka a \emph{failure set}) is a set of failure pairs, satisfying the following conditions:
\begin{enumerate}
\item $\varepsilon \in \trF$
\item  $wa \in \trF \Rightarrow  w \in \trF$
\item $(w,W) \in F \wedge V \subseteq W \Rightarrow (w,V) \in F$
\item $(w,W) \in F \wedge a \notin \futF \Rightarrow (w, W \cup \{a\}) \in F$
\end{enumerate}
A refusal set is \emph{finitary} if its set of traces is finite.

The collection of finitary refusal sets can be turned into a
$\mathrm{CSP}(\boxempty)$-algebra $\fR$ by  the following standard
definitions of the operators:

\[\begin{array}{lcl}
	\mystop_{\fR}& = &\{(\varepsilon,W) \mid W \subseteq_{\mathrm{fin}} A\}\\
	a\myactionsub_{\fR} F & = &\{(\varepsilon, W) \mid a \notin W\} \cup
	                                     \{(aw,W)\mid (w,W) \in F\}\\
	 F \sqcap_{\fR} G & = & F \cup G\\
     F \boxempty_{\fR} G & = &
           \{(\varepsilon,W)\mid (\varepsilon,W) \in F \cap G\}
           \cup
           \{(w,W)\mid  w \neq \varepsilon,~ (w,W) \in F \cup G\}
\end{array}\]
The other CSP operation symbols also have standard interpretations
over the collection of {finitary} refusal sets: 
\[\begin{array}{lcl}
f(F)  & = &  \{(f(w), W) \mid (w,f^{-1}(W) \cap \fut_{F}(w)) \in F\}\\
F\backslash a  & =  &  \{(w \backslash a,W) \mid (w,W \cup \{a\}) \in F\}\\
F \mypar G & = & \{ (w, W \cup V) \mid (w,W) \in F,~ (w,V) \in G\}\\
F \myinterleave G & = & \{ (w, W) \mid (u,W) \in F,~ (v,W) \in G,~ w \in u\!\mid \!v\}
 \end{array}\]
with the evident action of $f$ on sequences and sets of actions, and
where $w\backslash a$ is obtained from $w$ by removing all occurrences
of $a$, and where $u\!\mid \!v$ is the set of interleavings of $u$ and $v$.

\begin{lemma} \label{Refusal_accounting} Let $F$ be a {\finitary} refusal set. Then for every  $w \in \trF$ there are $V_{1}, \ldots, V_{n} \subseteq \futF(w)$, including $\futF(w)$,  such that 
 $(w,W) \in F$ iff $W \cap V_{i} = \emptyset$ for some $i \in \{1, \ldots, n\}$.
\end{lemma}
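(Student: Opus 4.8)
The plan is to fix $w \in \trF$ and to show that membership of a pair $(w,W)$ in $F$ depends only on the finite set $W \cap \futF(w)$; the required sets $V_{i}$ are then obtained as the complements within $\futF(w)$ of the possible refusals. Write $Y = \futF(w)$. I would begin by recording that $Y$ is \emph{finite}: since $F$ is {\finitary}, $\trF$ is finite, so only finitely many one-step extensions $wa$ lie in $\trF$, whence $Y = \{a \mid wa \in \trF\}$ is finite. This is the one place where {\finitary}ness is genuinely used, and it is exactly what guarantees that the eventual list $V_{1}, \ldots, V_{n}$ can be taken finite.

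The key step is the reduction $(w,W) \in F \iff (w, W \cap Y) \in F$. The forward implication is immediate from closure condition~(3), as $W \cap Y \subseteq W$. For the converse, every action of the finite set $W \setminus Y$ lies outside $\futF(w)$, so I would add these actions one at a time to $W \cap Y$ using condition~(4), obtaining $(w,W) \in F$ after finitely many steps. Setting $\mathcal{R} = \{R \subseteq Y \mid (w,R) \in F\}$, this gives $(w,W) \in F \iff W \cap Y \in \mathcal{R}$. I would then note that $\mathcal{R}$ is downward closed (by condition~(3)), nonempty (since $w \in \trF$ gives $(w,\emptyset) \in F$, so $\emptyset \in \mathcal{R}$), and finite (its members are subsets of the finite set $Y$).

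Finally I would define $\{V_{1}, \ldots, V_{n}\} = \{Y \setminus R \mid R \in \mathcal{R}\}$. Each $V_{i} \subseteq Y = \futF(w)$, and because $\emptyset \in \mathcal{R}$ we have $Y = Y \setminus \emptyset$ among them, so $\futF(w)$ is included as the statement demands. For the biconditional, observe that $W \cap (Y \setminus R) = \emptyset \iff W \cap Y \subseteq R$; hence, using the downward closure of $\mathcal{R}$ (with $R = W \cap Y$ for one direction),
\[ (w,W) \in F \iff W \cap Y \in \mathcal{R} \iff \exists R \in \mathcal{R}.\ W \cap Y \subseteq R \iff \exists i.\ W \cap V_{i} = \emptyset. \]
This completes the argument. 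If a shorter list is wanted one can restrict $\mathcal{R}$ to its maximal elements before complementing, though this is not required by the statement and one must then re-insert $\futF(w)$ by hand, which stays sound since $W \cap \futF(w) = \emptyset$ already forces $(w,W) \in F$.

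I expect the only real subtlety to be the reduction step, which must combine both closure conditions correctly: condition~(3) to shrink $W$ down to $W \cap Y$, and condition~(4) to re-inflate $W \cap Y$ back to $W$ across the impossible actions. Everything else is bookkeeping about finite downward-closed families of subsets of the finite set $Y$, with {\finitary}ness invoked solely to keep $Y$, and therefore the index set $\{1,\ldots,n\}$, finite.
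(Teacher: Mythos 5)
Your proof is correct and follows essentially the same route as the paper's: reduce membership of $(w,W)$ to that of $(w,W\cap\futF(w))$ via closure conditions (3) and (4), then take the $V_i$ to be the complements in $\futF(w)$ of the (finitely many) refusals contained in $\futF(w)$, with $(w,\emptyset)\in F$ ensuring $\futF(w)$ itself appears. You merely spell out the downward-closure bookkeeping that the paper leaves implicit.
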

\begin{proof} The closure conditions imply that
$(w,W)$ is in $F$  iff   $(w,W \cap \futF(w))$ is.
Thus we only need to be concerned about pairs $(w,W)$ with
$W \subseteq \futF(w)$.
Now, as $\futF(w)$ is finite, for any relevant $(w,W) \in F$, of which there are finitely many, we
can  take $V$ to be $\futF(w) \backslash W$, and we obtain finitely many such sets.
As $(w,\emptyset) \in F$, these include $\futF(w)$.
\qed \end{proof}
\begin{lemma} \label{Definability}
 {All {\finitary} refusal sets are definable by closed $\mathrm{CSP}(\boxempty)$ terms.} \end{lemma}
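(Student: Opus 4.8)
The plan is to argue by induction on the \emph{depth} of $F$, i.e. the length of the longest trace in $\trF$ (which is finite, since $F$ is {\finitary}), constructing for each $F$ a closed $\mathrm{CSP}(\boxempty)$ term in a suitable normal form whose denotation over $\fR$ is $F$. The normal form expresses the behaviour at the root as an internal choice of deterministic external choices, with continuations handled recursively.

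Given $F$, I first apply Lemma~\ref{Refusal_accounting} at the empty trace: there are menus $V_{1}, \ldots, V_{n} \subseteq \futF(\varepsilon)$, including $\futF(\varepsilon)$ itself, such that $(\varepsilon,W) \in F$ iff $W \cap V_{i} = \emptyset$ for some $i$. For each initial action $a \in \futF(\varepsilon)$ put $F/a = \{(w,W) \mid (aw,W) \in F\}$. One checks readily that $F/a$ is again a {\finitary} refusal set (the closure conditions transfer and $\tr_{F/a}$ is finite) and that it has strictly smaller depth than $F$, since each of its traces $w$ arises from the longer trace $aw$ of $F$. By the induction hypothesis there is then a closed term $t_{a}$ with $\dsem{t_{a}} = F/a$; as $\futF(\varepsilon)$ is finite, there are only finitely many such $t_{a}$. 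I then set
\[ t = \bigsqcap_{i=1}^{n} \Bigl( \bigbox_{a \in V_{i}} a\myactionsub t_{a} \Bigr), \]
reading the inner empty external choice (when $V_{i} = \emptyset$) as $\mystop$. Since each $V_{i}$ is finite and $n$ is finite, $t$ is a well-formed closed term.

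It remains to verify $\dsem{t} = F$. Using the interpretations of the operators over $\fR$, the inner term $S_{i} = \bigbox_{a \in V_{i}} a\myactionsub t_{a}$ has, at the empty trace, exactly the pairs $(\varepsilon,W)$ with $W \cap V_{i} = \emptyset$ (external choice intersects the empty-trace refusals of its summands, and $a\myactionsub t_{a}$ refuses $W$ at $\varepsilon$ precisely when $a \notin W$); and at a nonempty trace it has $(aw,W)$ iff $a \in V_{i}$ and $(w,W) \in F/a$, i.e. $(aw,W) \in F$. Taking the internal choice, which is union, the empty-trace pairs of $t$ are those $(\varepsilon,W)$ with $W \cap V_{i} = \emptyset$ for some $i$, which by Lemma~\ref{Refusal_accounting} are exactly the empty-trace pairs of $F$; and the nonempty-trace pairs of $t$ are the $(aw,W) \in F$ for which some $V_{i}$ contains $a$, which---because $\futF(\varepsilon)$ occurs among the $V_{i}$, and $aw \in \trF$ forces $a \in \futF(\varepsilon)$ by prefix closure---are precisely all nonempty pairs of $F$. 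Hence $\dsem{t} = F$.

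The base case (where $\trF = \{\varepsilon\}$, $\futF(\varepsilon) = \emptyset$, and $t = \mystop$ denotes $\mystop_{\fR} = F$) is routine. The one genuinely load-bearing step is the appeal to Lemma~\ref{Refusal_accounting}, which guarantees that the possibly complicated refusal behaviour at each node is captured by finitely many menus and hence by a finite internal choice of deterministic external choices. The main thing to get right is the bookkeeping that distinguishes the empty trace from nonempty ones in the semantics of $\boxempty$ and $\sqcap$, together with the observation that the continuation $t_{a}$ after an action depends only on $a$, via $F/a$, and not on the menu through which $a$ was offered---which is exactly what makes the displayed normal form legitimate.
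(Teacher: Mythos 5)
Your proof is correct and follows essentially the same route as the paper's: induction on the length of the longest trace, an appeal to Lemma~\ref{Refusal_accounting} at the empty trace to obtain the menus $V_{i}$, the derived processes $F_{a}$ indexed by initial actions, and the representation $F = \bigsqcap_{i}\bigbox_{a \in V_{i}} a\myaction F_{a}$. The only difference is that you spell out the semantic verification that the paper leaves as ``it is not hard to see,'' including the role of $\futF(\varepsilon)$ being among the $V_{i}$ in capturing all nonempty-trace pairs.
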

\begin{proof} Let $F$ be a {\finitary} refusal set. We proceed by induction on the length of the longest trace in $F$. 
By the previous lemma  there are sets $V_{1}, \ldots, V_{n}$, including $\futF(\varepsilon)$, such that
  $(\varepsilon,W) \in F$ iff $W \cap V_{i} = \emptyset$ for some $i \in \{1, \ldots, n\}$.
 Define $F_{a}$, for $a \in \futF(\varepsilon)$, by:
  \[F_{a} = \{ (w,W) \mid (aw,W) \in F\}\] 
  Then it is not hard to see that each $F_{a}$ is a {\finitary} refusal set, and that
  \[F = \bigsqcap_{i}\bigbox_{a \in V_{i}} a\myaction F_{a}\]
  As the longest trace in $F_{a}$ is strictly shorter than the longest one in $F$, the proof concludes, employing the induction hypothesis.
\qed \end{proof}

We next recall some material from de Nicola~\cite{DeN85}. Let $\mathcal{L}$ be a collection of sets; we say it is \emph{saturated} if whenever $L \subseteq L' \subseteq \bigcup \mathcal{L}$, for $L \in \mathcal{L}$ then $L' \in \mathcal{L}$. Then a closed $\mathrm{CSP}(\boxempty)$-term $t$ is in \emph{normal form} if it is of the form:
\[ \bigsqcap_{L \in \mathcal{L}}\bigbox_{a \in L} a\myaction t_{a} \]
where $\mathcal{L}$ is a finite non-empty saturated collection of finite sets of actions and each term $t_{a}$ is in normal form. Note that the concept of normal form is defined recursively.

\begin{proposition}\label{completeness}
  $\mathrm{CSP}(\boxempty)$ is ground equationally
  complete with respect to {$\fR$}.
\end{proposition}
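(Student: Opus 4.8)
The plan is to prove the two implications of ground equational completeness separately. The forward (soundness) direction---every provable ground equation holds in $\fR$---is routine: it suffices to check that $\fR$ satisfies each of de Nicola's axioms, since satisfaction is then inherited by all their provable consequences. This is exactly the verification already implicit in asserting that $\fR$ is a $\mathrm{CSP}(\boxempty)$-algebra, so I would simply note it. All the real work is in the converse (completeness): if $\dsem{t}(\emptyset) = \dsem{u}(\emptyset)$ in $\fR$ for closed terms $t,u$, then $t = u$ is provable from $\mathrm{CSP}(\boxempty)$.

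For completeness I would follow the classical normal-form route, in two lemmas. The first is a \emph{normalization} lemma: every closed $\mathrm{CSP}(\boxempty)$-term is provably equal to a term in normal form. Since the signature of $\mathrm{CSP}(\boxempty)$ contains only $\mystop$, action prefix, $\sqcap$ and $\boxempty$, this is a structural induction with just four cases. The terms $\mystop$ (take $\mathcal{L} = \{\emptyset\}$) and $a\myaction t$ with $t$ inductively normal (take $\mathcal{L} = \{\{a\}\}$) are immediate. The cases $t \sqcap u$ and $t \boxempty u$, with $t,u$ already normal, are the substance: one forms the combined internal choice (respectively the menu-wise external combination) and then re-normalizes. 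Two obstructions must be cleared. First, after combining, distinct menus may prescribe different continuations $z,w$ for one and the same initial action; these are unified using the derived equation~\eqref{also helpful}, which replaces the competing continuations by their internal choice $z \sqcap w$ under the common action. Second, the resulting menu collection need not be saturated; the missing larger menus $L'$ with $L \subseteq L' \subseteq \bigcup\mathcal{L}$ are adjoined, without changing the denotation, using~\eqref{helpful} (equivalently the two distributivity laws), which is precisely the device for adding a redundant alternative to an internal choice.

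The second is a \emph{canonicity} lemma: the denotation in $\fR$ determines the normal form uniquely. Here I would induct on the length of the longest trace, as in Lemma~\ref{Definability}. Given the refusal set $F = \dsem{t}(\emptyset)$ of a normal form $\bigsqcap_{L \in \mathcal{L}}\bigbox_{a \in L} a\myaction t_{a}$, one reads off $\bigcup\mathcal{L} = \futF(\varepsilon)$; and because $\mathcal{L}$ is saturated, a set $W \subseteq \futF(\varepsilon)$ is a refusal at $\varepsilon$ exactly when $\futF(\varepsilon)\setminus W \in \mathcal{L}$, so that $\mathcal{L} = \{\futF(\varepsilon)\setminus W \mid (\varepsilon,W)\in F,\ W \subseteq \futF(\varepsilon)\}$ is recovered from $F$ alone. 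Each continuation's denotation is the residual $F_{a} = \{(w,W)\mid (aw,W)\in F\}$, again recovered from $F$, and its longest trace is strictly shorter; so by the induction hypothesis the $t_{a}$ are the unique normal forms of the $F_{a}$. Hence $F$ determines the entire normal form, up to the conventions already absorbed into the $\bigsqcap$/$\bigbox$ notation (associativity, commutativity and idempotence of $\sqcap$, and the semilattice-with-zero laws for $\boxempty$).

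Combining these proves the proposition: given a ground equation $t = u$ holding in $\fR$, normalize to $t^{*}$ and $u^{*}$, each provably equal to its source; soundness gives $\dsem{t^{*}}(\emptyset) = \dsem{t}(\emptyset) = \dsem{u}(\emptyset) = \dsem{u^{*}}(\emptyset)$; and canonicity forces $t^{*}$ and $u^{*}$ to be the same normal form, whence $t = t^{*} = u^{*} = u$ is provable. I expect the normalization step to be the main obstacle---specifically the simultaneous tasks of unifying continuations under shared initial actions and saturating the menu collection while preserving the denotation---and this is exactly where de Nicola's carefully chosen axioms, distilled into~\eqref{helpful} and~\eqref{also helpful}, are needed.
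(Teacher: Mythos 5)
Your proposal is correct and follows essentially the same route as the paper's own proof: reduce every closed term to a de Nicola normal form provably, using equations~(\ref{helpful}) and~(\ref{also helpful}) (the paper defers the details to de Nicola's Proposition A6), and then show that a normal form is determined by its denotation in $\fR$ (the paper cites de Nicola's Lemma 4.8, and its Lemma~\ref{equal normal forms} proves the same canonicity statement in greater generality by essentially your argument of recovering $\mathcal{L}$ from the $\varepsilon$-failures via saturation and recursing on the residuals $F_a$). Your additional detail on unifying continuations and saturating the menu collection is consistent with, and a reasonable expansion of, what the paper leaves implicit.
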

\begin{proof}  Every term is provably equal in
  $\mathrm{CSP}(\boxempty)$ to a term in normal form. For the proof,
  follow that of Proposition A6 in~\cite{DeN85}; alternatively, it is
  a straightforward induction in which equation{s~(\ref{helpful})
  and~(\ref{also helpful}) are}
  helpful.  Further, it is an immediate consequence of Lemma 4.8
  in~\cite{DeN85} that if two normal forms have the same denotation in
  $\mathnormal{\fR}$ then they are identical (and
  Lemma~\ref{equal normal forms} below establishes a more general
  result). The result then follows. 
\qed \end{proof}

\begin{theorem} \label{Initial1} The {\finitary} refusal sets algebra
  $ \mathnormal{\fR}$ is the initial $\mathrm{CSP}(\boxempty)$ algebra.
\end{theorem}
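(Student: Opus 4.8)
The plan is to establish the universal property of initiality directly: for an arbitrary $\mathrm{CSP}(\boxempty)$-algebra $\mathcal{A}$ I exhibit a unique homomorphism from $\fR$ to $\mathcal{A}$. The two ingredients needed are already in hand. Lemma~\ref{Definability} says that the denotation map sending a closed term $t$ to its value $\dsem{t}$ in $\fR$ is surjective, while Proposition~\ref{completeness} (ground equational completeness) says that this map identifies exactly those closed terms that are provably equal. Taken together these assert that $\fR$ is, up to isomorphism, the algebra of closed $\mathrm{CSP}(\boxempty)$-terms modulo provable equality, which is the standard description of the initial algebra; so the theorem is a formal consequence of the preceding two results.

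Concretely, given a $\mathrm{CSP}(\boxempty)$-algebra $\mathcal{A}$, I would define $\phi\colon \fR \to \mathcal{A}$ by choosing, for each finitary refusal set $F$, a closed term $t$ with $\dsem{t}=F$ (possible by Lemma~\ref{Definability}) and setting $\phi(F) = \dsem{t}_{\mathcal{A}}$. Well-definedness is the crux, and it is exactly where completeness enters: if $t$ and $u$ both denote $F$ in $\fR$, then the ground equation $t=u$ holds in $\fR$, so by Proposition~\ref{completeness} it is provable in $\mathrm{CSP}(\boxempty)$, whence---since $\mathcal{A}$ satisfies all the axioms, being a $\mathrm{CSP}(\boxempty)$-algebra---it holds in $\mathcal{A}$, giving $\dsem{t}_{\mathcal{A}} = \dsem{u}_{\mathcal{A}}$. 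That $\phi$ is a homomorphism is then immediate: if $t_{1},\ldots,t_{n}$ represent $F_{1},\ldots,F_{n}$, then $\mathrm{op}(t_{1},\ldots,t_{n})$ represents $\mathrm{op}_{\fR}(F_{1},\ldots,F_{n})$, so $\phi$ commutes with each operation by the definition of denotation in $\mathcal{A}$.

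For uniqueness I would observe that any homomorphism $\psi\colon \fR\to\mathcal{A}$ must send $\dsem{t}$ to $\dsem{t}_{\mathcal{A}}$ for every closed term $t$, by induction on the structure of $t$ using the homomorphism property; since every element of $\fR$ is of the form $\dsem{t}$ by Lemma~\ref{Definability}, this pins $\psi$ down to $\phi$. I do not expect a genuine obstacle here: all the difficulty has been isolated into the normal-form argument behind Proposition~\ref{completeness} (which supplies injectivity) and the inductive definability argument of Lemma~\ref{Definability} (which supplies surjectivity). The only points requiring any care are the soundness step---that provable equations hold in every $\mathrm{CSP}(\boxempty)$-algebra, the routine `only if' half of completeness---and the tacit fact that $\fR$ really is a $\mathrm{CSP}(\boxempty)$-algebra, i.e.\ that the operator definitions validate de~Nicola's axioms.
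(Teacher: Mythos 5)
Your proposal is correct and rests on exactly the same two ingredients as the paper's proof---Lemma~\ref{Definability} and Proposition~\ref{completeness}. The only difference is one of packaging: the paper invokes the abstract initial algebra $\mathrm{I}$ and shows the unique homomorphism $\mathrm{I}\rightarrow\fR$ is surjective (by definability) and injective (by completeness), hence an isomorphism, whereas you unfold this into a direct verification of the universal property; the two arguments are interchangeable.
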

\begin{proof}  Let the initial such algebra be $\mathrm{I}$. There is
  a unique homomorphism $h \type \mathrm{I} \rightarrow
  \mathnormal{\fR}$. By Lemma~\ref{Definability}, $h$ is a
  surjection.
By the previous proposition,    $\mathnormal{\fR}$ is complete for equations between closed terms, and so $h$ is  an injection. So $h$ is an isomorphism, completing the proof.
\qed \end{proof}

\section{Effect deconstructors} \label{Deconstructors}

In the algebraic theory of effects, the semantics of effect \emph{deconstructors}, such as exception handlers, is given using homomorphisms from free algebras. 
In this case we are interested in $T_{\mathrm{CSP}(\boxempty)}(\emptyset)$. This is the initial $\mathrm{CSP}(\boxempty)$ algebra, $\mathnormal{\fR}$, so given a $\mathrm{CSP}(\boxempty)$ algebra:
\[\mathcal{A} = (T_{\mathrm{CSP}(\boxempty)}(\emptyset), \sqcap_{\mathcal{A}},\mystop_{\mathcal{A}},
(a\myactionsub_{\mathcal{A}}), \boxempty_{\mathcal{A}})\] 
there is a unique homomorphism:
\[h \type \mathnormal{\fR} \rightarrow \mathcal{A}\]  

\runinhead{Relabelling}
We now seek to define $f(-)\type T_{\mathrm{CSP}(\boxempty)}(\emptyset) \rightarrow T_{\mathrm{CSP}(\boxempty)}(\emptyset)$ homomorphically. Define an algebra $\mathnormal{Rl}$ on 
$T_{\mathrm{CSP}(\boxempty)}(\emptyset)$ by putting, for refusal sets $F,G$:
\[\mystop_{\mathnormal{Rl}} = \mystop_{\mathnormal{\fR}} \]
\[(a\myactionsub_{\mathnormal{Rl}}  F)  = (f(a)\myactionsub_{\mathnormal{\fR}}  F)\]
\[F \sqcap_{\mathnormal{Rl}} G = F \sqcap_{\mathnormal{\fR}} G \quad\quad F \boxempty_{\mathnormal{Rl}} G = F \boxempty_{\mathnormal{\fR}} G\]
One has to verify this gives a $\mathrm{CSP}(\boxempty)$-algebra, which amounts to verifying that 
the two action equations hold, for example that, for all $F,G$:
\[a\myactionsub_{\mathnormal{Rl}}  (F \sqcap_{\mathnormal{Rl}} G) = 
(a\myactionsub_{\mathnormal{Rl}}  F)
\sqcap_{\mathnormal{Rl}} 
(a\myactionsub_{\mathnormal{Rl}}  G)
\]
which is equivalent to:
\[f(a)\myactionsub_{\mathnormal{\fR}}  (F \sqcap_{\mathnormal{\fR}} G) = 
(f(a)\myactionsub_{\mathnormal{\fR}}  F)
\sqcap_{\mathnormal{\fR}} 
(f(a)\myactionsub_{\mathnormal{\fR}}  G)
\]
We therefore have a unique homomorphism  \[\mathnormal{\fR} \stackrel{h_{\mathnormal{Rl}}}{\longrightarrow} \mathnormal{Rl}\] and so the following equations hold over the algebra $\mathnormal{\fR} $:
\[h_{\mathnormal{Rl}} (\mystop) = \mystop\]
\[h_{\mathnormal{Rl}} (a\myaction F) = f(a)\myaction h_{\mathnormal{Rl}}(F)\]
\[h_{\mathnormal{Rl}} (F \sqcap G) = h_{\mathnormal{Rl}}(F)\sqcap h_{\mathnormal{Rl}}(G) \quad\quad h_{\mathnormal{Rl}} (F  \boxempty G) = h_{\mathnormal{Rl}}(F) \boxempty h_{\mathnormal{Rl}}(G)\]
Informally one can use these equations to define $h_{\mathnormal{Rl}}$ by a `\emph{principle of equational recursion},'  but one must remember to verify that the implicit algebra obeys the required equations.

We use $h_{\mathnormal{Rl}}$ to interpret relabelling. We then immediately recover the familiar CSP laws:
\[f (\mystop) = \mystop\]
\[f (a\myaction x) = f(a)\myaction f(x)\]
\[f (x \sqcap y) = f(x)\sqcap f(y) \quad\quad f (x  \boxempty y) = f(x) \boxempty f(y)\]
which we now see to be restatements of the homomorphism of relabelling.

\runinhead{Concealment}
There is a difficulty here. We do not have that
\[  (F \boxempty G)\backslash a = F\backslash a \, \boxempty G\backslash a\]
but rather have the following two equations (taken from~\cite{DeN85}):
\begin{equation} \label{eqbox1}  ((a\myaction F) \boxempty G)\backslash  a = F\backslash  a \sqcap ((F \boxempty G)\backslash a) \end{equation}
\begin{equation} \label{eqbox2} (\bigbox_{i = 1}^{n} a_{i}\boxaction F_{i})\backslash a = 
\bigbox_{i = 1}^{n} a_{i} \boxaction(F_{i} \backslash a)\end{equation}
where no $a_{i}$ is $a$. Furthermore, there is no direct definition of concealment via an equational recursion, i.e., there is no suitable choice of algebra, $\boxempty_{\mathcal{A}}$ etc. For, if there were, we would have:
\begin{equation} \label{eq2} (F \boxempty G)\backslash a = F\backslash a \, \boxempty_{\mathcal{A}} G\backslash a
\end{equation}
So if $a$ does not occur in any trace of $F'$ or $G'$ we would have:
\[\begin{array}{lcl}
F' \, \boxempty_{\mathcal{A}} G'& = & F'\backslash a \, \boxempty_{\mathcal{A}} G'\backslash a\\
                                & = & (F' \boxempty G')\backslash a\\
                                & = & F' \boxempty G'
\end{array}\]
but, returning to equation~(\ref{eq2}), $a$ certainly does not occur in any trace of $F\backslash a$ or $G\backslash a$ and so we would have:
\[\begin{array}{lcl}
(F \boxempty G)\backslash a & = & F\backslash a \, \boxempty_{\mathcal{A}} G\backslash a\\
                  & = & F\backslash a \, \boxempty_{\mathnormal{\fR}} G\backslash a
\end{array}\]
which is false.
It is conceivable that although there is no direct homomorphic definition of concealment, there may be an indirect one where other functions (possibly with parameters---see below) are defined homomorphically and concealment is definable as a combination of those.

\subsection{Concurrency operators}

Before trying to recover from the difficulty with concealment, we look at a further difficulty, that of accommodating binary deconstructors, particularly parallel operators. We begin with a simple example in a strong bisimulation context, but rather than a  concurrency operator in the style of CCS we consider one analogous to CSP's $\mypar$.

We take as signature a unary action prefix, $a.-$, for $a \in A$, a nullary $\prog{NIL}$ and a binary sum $+$.  The axioms are that $+$ is a semilattice with zero $\prog{NIL}$; the initial algebra is then that of finite synchronisation trees $\mathnormal{ST}$.
Every synchronisation tree $\tau$ has a finite depth and can be written as  \[\sum_{i = 1}^{n} a_{i}.\tau_{i}\] for some $n \geq 0$,  where the $\tau_{i}$ are also synchronisation trees (of strictly smaller depth), and where no pair 
$(a_{i},\tau_{i})$ occurs twice. The order of writing the summands makes no difference to the tree denoted.

One can define a binary synchronisation operator $\mypar$ on synchronisation trees $\tau = \sum_{i} a_{i}.\tau_{i}$ and $\tau' = \sum_{j} b_{j}.\tau_{j} $ by induction on the depth of $\tau$ (or $\tau'$):
\[\tau\mypar\tau' = \sum_{a_{i} \,= \,b_{j}} a_{i}.(\tau_{i} \mypar \tau'_{j}) \]
Looking for an equational recursive definition of $\mypar$, one may try a `mutual (parametric) equational recursive definition' of $\mypar$ and a certain family $\mypar^{a}$  with $x,y,z$ varying over 
 $\mathnormal{ST}$:
\[\begin{array}{ccl}\prog{NIL} \mypar z &  =  & \prog{NIL}\\
(x + y) \mypar z &  =  & (x \mypar z) + (y \mypar z)\\
a.x \mypar z &  = & x \mypar^{a} z
\end{array}\]
and

\[\begin{array}{ccl} z \mypar^{a }\prog{NIL} &  =  & \prog{NIL}\\
z \mypar^{a } (x + y)  &  =  & (z \mypar^{a } x) + (z \mypar^{a } y)\\
z \mypar^{a } b.x  &  = &   \left\{
\begin{array}{ll}
a.(z \mypar x) & (\mbox{if  } b = a)\\
\prog{NIL} & (\mbox{if  } b \neq a)
\end{array}
\right.
\end{array}\]
Unfortunately, this definition attempt is not an equational recursion.
Mutual (parametric)  equational recursions are single ones to an algebra on a product. Here we wish a map: $ \mathnormal{ST} \rightarrow \mathnormal{ST} \times \mathnormal{ST}$. Informally we would write such clauses as:
\[\langle (x + y) \mypar z\; , \; z \mypar^{a} (x + y) \rangle = \langle (x\mypar z) + (y\mypar z)\;,\; (z\mypar^{a} x) + (z\mypar^{a} y) \rangle\]
with the recursion variables, here $x,y$,  on the left for $\mypar$ and on the right for $\mypar^{a}$. 
However:
\[\langle a.x \mypar z \; , \; z \mypar^{a} b.x \rangle =  \left\{
\begin{array}{ll}
\langle  x \mypar^{a} z\;,\;  a.(z \mypar x)  \rangle & (\mbox{if  } b = a)\\
\langle  x \mypar^{a} z\;,\; \prog{NIL} \rangle & (\mbox{if  } b \neq a)
\end{array}\right.
\]
does not respect this discipline: the recursion variable, here $x$,  (twice) switches places with the parameter $z$.
   
We are therefore caught in a dilemma. One can show, by induction on the depth of synchronisation trees, that the above definitions, viewed as equations for $\mypar$ and  
$\mypar^{a}$ have a unique solution: the expected synchronisation operator $\mypar$, and the functions $\mypar^{a}$ defined on synchronisation trees $\tau$ and $\tau' = \sum_{j} b_{j}.\tau_{j} $ by:
 \[\tau \mypar^{a} \tau' = \sum_{b_{j} \,= \, a}  a.(\tau\mypar \tau_{j})\]
 So we have a correct definition not in equational recursion format. So we must either
\begin{itemize}
\item find a different correct definition in the equational recursion format
\end{itemize}
or else
\begin{itemize}
\item find another algebraic format into which the correct definition fits.
\end{itemize}

When we come to the CSP parallel operator we do not even get as far as we did with synchronisation trees.
The problem is like that with concealment:  the distributive equation:
\[(F \boxempty F') \mypar G =  (F  \mypar G) \boxempty (F' \mypar G)\]
 does not hold. 
 One can show that there is no definition of $\mypar$ analogous to the above one for synchronisation trees, i.e., there is no suitable choice of algebra, $\boxempty_{\mathcal{A}}$ etc, and functions $\mypar^{a}$. The reason is that there is no binary operator $\boxempty'$ on ({\finitary}) failure sets such that, for all $F,G,H$  we have:
\[(F \boxempty F') \mypar G =  (F  \mypar G) \boxempty' (F' \mypar G)\]
For suppose, for the sake of contradiction, that there is such an operator. Then, fixing $F$ and $F'$,
choose $G$ such that $F\mypar G =  F$, $F'\mypar G =  F'$ and $(F \boxempty F') \mypar G =  (F\boxempty F')$. Then, substituting into the above equation, we obtain that $F \boxempty F' = F \boxempty' F'$ and so the above equation yields distributivity, which, in fact, does not hold.
As in the case of concealment, there may nonetheless be an indirect definition of $\mypar$.

A similar difficulty obtains for the CSP interleaving operator.
It too does not commute with $\boxempty$, and it too does not have any direct definition (the argument is like that for the concurrency operator but a little simpler, taking $G = \mystop$). As in the case of the concurrency operator,  there may be an indirect definition.

\section{Another choice of CSP effect constructors} \label{CSPconstructors2}

Equations (\ref{eqbox1}) and (\ref{eqbox2})
do not immediately suggest a recursive definition of concealment.
However, one can show that, for distinct actions $a_i$ ($i = 1,n$),
the following equation  holds between refusal sets:
\[(\bigbox_{i = 1}^{n} a_i\myaction F_i)\backslash a_j = 
             (F_j\backslash a_j) \sqcap ((F_j\backslash a_j) \boxempty
             \bigbox_{i \neq j} a_i\myaction (F_i\backslash a_j))   \]
where $1 \leq j \leq n$. Taken together with  equation (\ref{eqbox2}),  this  suggests a recursive definition in terms of deterministic external choice.
We therefore now change our choice of constructors, replacing binary external choice, action prefix and deadlock by  deterministic external choice.

So as our second signature for CSP we take a binary operation symbol $\sqcap$ of internal choice and, for any \emph{deterministic action sequence $\vec{a}$}
(i.e., any sequence of actions $a_i$ ($i = 1,n$), with the $a_i$ all different and $n \geq 0$),  an $n$-ary operation symbol $\bigbox_{\vec{a}}$ of deterministic external choice.
We write $\bigbox_{\vec{a}}(t_{1},\ldots,t_{n})$ as $\bigbox_{i = 1}^{n} a_i\boxaction t_i$ 
although it is more usual to use Hoare's notation $(a_{1} \rightarrow t_{1} \mid \dots \mid  a_{n} \rightarrow t_{n})$; we also use $\mystop$ to abbreviate $\bigbox_{\vec{a}}()$.
 
We have the usual semilattice axioms for $\sqcap$. 
Deterministic external choice is  commutative, in the sense that:
\[\bigbox_i a_i\boxaction x_i = \bigbox_i a_{\pi(i)}\boxaction x_{\pi(i)}\]
for any permutation $\pi$ of $\{1,\ldots,n\}$. Given this, we are justified in writing deterministic external choices over finite, possibly empty,  sets of actions, $\bigbox_{a \in I} a\boxaction t_{a}$, assuming some standard ordering of 
pairs $(a,t_{a})$ without repetitions.

For the next axiom it is convenient to write
$(a_1\myaction t_1) \boxempty \bigbox_{i= 2}^{n} a_i\boxaction t_i$
for $\bigbox_{i = 1}^{n} a_i\boxaction t_i$ (for $n \geq 0$).
The axiom states that deterministic external choice distributes
over internal choice:
\[(a_1\myaction  (x \sqcap x')) \;\boxempty \; \bigbox_{i= 2}^{n} a_i\boxaction x_i = \left((a_1\myaction  x) \;\boxempty \; \bigbox_{i= 2}^{n} a_i\boxaction x_i\right)\;\, \sqcap \;\,\left((a_1\myaction  x') \;\boxempty \; \bigbox_{i= 2}^{n} a_i\boxaction x_i\right)\]
This implies that deterministic external choice is monotone with respect to $\sqsubseteq$.

We can regard a, possibly nondeterministic,  external choice,
in which the $a_i$ need not be all different, as an abbreviation for a deterministic one, via:
\begin{eqnarray}\label{convention} \bigbox_i a_i\boxaction t_i  & = &
  \bigbox_{b \in \{a_1,\ldots, a_n\!\}} b\boxaction \left(\bigsqcap_{a_i = b}t_i\right)
  \end{eqnarray}
With that convention we may also write $a_1\myaction t_1 \;\boxempty \; \bigbox_{i= 2}^{n} a_i\boxaction t_i$ even when $a_1$ is some $a_i$, for $i > 1$.
We can now write our final axiom:
\begin{equation}\label{final axiom}
\left(\bigbox_i a_i\boxaction x_i\right) \;\sqcap\; \left((b_1\myaction y_1) \boxempty \bigbox_{j =2}^{n}  b_j\boxaction y_j\right)
\;\,\sqsubseteq\;\,
(b_1\myaction y_1) \;\boxempty\; \bigbox_i a_i\boxaction x_i
\end{equation}
Restricting the external choice $(b_1\myaction y_1) \boxempty
\bigbox_j  b_j\boxaction y_j$ to be deterministic gives an equivalent
axiom, as does restricting   $\bigbox_i a_i\boxaction x_i$ (in the presence of the others).

Let us call this equational theory $\mathrm{CSP}(\mid)$. The {\finitary} refusal sets form a  $\mathrm{CSP}(\mid)$-algebra $\mathnormal{\dfR}$ with the evident definitions:
\[\begin{array}{lcl}
F \sqcap_{\mathnormal{\dfR}} G & = & F \cup G\\
(\bigbox_{\vec{a}})_{\mathnormal{\dfR}}(F_{1},\ldots,F_{n}) & = & \{(\varepsilon, W) \mid W \cap \{a_{1},\dots,a_{n}\} = \emptyset\} \cup \{(a_{i}w,W) \mid (w,W) \in F_{i}\}
\end{array}\]

\begin{theorem}  The {\finitary} refusal sets algebra $\mathnormal{\dfR}$ is complete for equations between closed $\mathrm{CSP}(\mid)$ terms.
\end{theorem}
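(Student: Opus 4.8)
The plan is to mirror the proof of Proposition~\ref{completeness}: prove a normal-form theorem for $\mathrm{CSP}(\mid)$ and then read off completeness from the fact that distinct normal forms have distinct denotations. Since $\dfR$ has already been noted to be a $\mathrm{CSP}(\mid)$-algebra, the soundness half of the biconditional (provable equations hold in $\dfR$) is immediate, so the substance of the theorem is the converse: closed terms denoting the same element of $\dfR$ are provably equal.

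First I would show that every closed $\mathrm{CSP}(\mid)$ term is provably equal to a normal form $\bigsqcap_{L \in \mathcal{L}} \bigbox_{a \in L} a\myaction t_a$, in the sense already defined, with $\mathcal{L}$ finite, non-empty and saturated and each $t_a$ again in normal form. The three $\mathrm{CSP}(\mid)$ primitives already lie inside this format---$\mystop$ is the nullary deterministic external choice, action prefix $a\myaction t$ is the unary one, and a deterministic external choice over a set of actions is literally $\bigbox_{a \in L} a\myaction t_a$---so the reduction is by structural recursion. Distributivity of deterministic external choice over $\sqcap$ pushes every internal choice outward to build the outer $\bigsqcap$; commutativity of deterministic external choice together with convention~\eqref{convention} normalises each external-choice term to one indexed by a set; and a derived analogue of equation~\eqref{also helpful} is used to replace the several continuations attached to a single action $a$ (coming from the different $L$ containing it) by their internal choice, making the continuation $t_a$ depend on $a$ alone, so that the induction hypothesis applies to each $t_a$.

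The hard part will be arranging that the index collection $\mathcal{L}$ is saturated. This is precisely what the final axiom~\eqref{final axiom} delivers: read with the left-hand deterministic external choice supplying a branch-set $L$ and the right-hand one contributing a single extra branch $b\myaction t_b$, it yields $\bigbox_{a \in L} a\myaction t_a \sqcap \bigbox_{a \in L_b} a\myaction t_a \sqsubseteq \bigbox_{a \in L \cup \{b\}} a\myaction t_a$ whenever $b \in L_b$. Taking $L, L_b \in \mathcal{L}$, both external choices are disjuncts of the outer $\bigsqcap$, so the whole meet lies below their meet and hence below $\bigbox_{a \in L \cup \{b\}} a\myaction t_a$; adjoining this new disjunct to the $\bigsqcap$ therefore leaves its provable value unchanged. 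Iterating over $b \in \bigcup\mathcal{L}$ shows that whenever $L \in \mathcal{L}$ and $L \subseteq L' \subseteq \bigcup\mathcal{L}$ one may add $L'$ to $\mathcal{L}$ without changing the provable value; closing under this operation produces a saturated collection. Verifying that this closure terminates and preserves the uniform continuations $t_a$ is the main piece of bookkeeping.

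Finally, with both given terms reduced to normal form, I would appeal to Lemma~\ref{equal normal forms}, which guarantees that normal forms with equal denotations in the refusal-sets model are syntactically identical. Since the two terms are assumed to denote the same element of $\dfR$ and each is provably equal to its normal form, those normal forms have the same denotation, hence coincide, and so the two terms are provably equal---establishing completeness and with it the theorem.
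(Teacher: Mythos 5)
Your proposal follows essentially the same route as the paper: reduce every closed $\mathrm{CSP}(\mid)$ term to De Nicola's normal form using the axioms (the paper leaves this as "a straightforward induction," whereas you usefully spell out the role of distributivity, convention~(\ref{convention}), and Axiom~(\ref{final axiom}) in achieving saturation), and then conclude via the fact that normal forms with the same denotation are identical, for which the paper cites Lemma 4.8 of De Nicola and, equivalently, the more general Lemma~\ref{equal normal forms}. The argument is correct and no genuinely different ideas are involved.
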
 
\begin{proof} De Nicola's normal form can be regarded as written in the signature of \mbox{$\mathrm{CSP}(\mid)$}, and a straightforward induction proves that every $\mathrm{CSP}(\mid)$ term can be reduced to such a  normal form using the above axioms. But two such normal forms have the same denotation  whether they are regarded as $\mathrm{CSP}(\boxempty)$ or as $\mathrm{CSP}(\mid)$ terms, and in the former case, by Lemma 4.8 of~\cite{DeN85}, they are identical. 
\qed \end{proof}

\begin{theorem} \label{Initial2} The {\finitary} refusal sets algebra $\mathnormal{\dfR}$ is the initial $\mathrm{CSP}(\mid)$ algebra. 
\end{theorem}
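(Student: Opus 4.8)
The plan is to transcribe the proof of Theorem~\ref{Initial1}, replacing its two inputs---definability and ground completeness---by their $\mathrm{CSP}(\mid)$ counterparts. Write $\mathrm{I}$ for the initial $\mathrm{CSP}(\mid)$ algebra. Since $\mathnormal{\dfR}$ is a $\mathrm{CSP}(\mid)$-algebra, initiality yields a unique homomorphism $h\type \mathrm{I} \rightarrow \mathnormal{\dfR}$, and it suffices to prove that $h$ is both surjective and injective, for then it is an isomorphism.

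For surjectivity I would first establish the analogue of Lemma~\ref{Definability} for the new signature: every finitary refusal set is definable by a closed $\mathrm{CSP}(\mid)$ term. The key observation is that the defining expression already exhibited in the proof of Lemma~\ref{Definability} is natively a $\mathrm{CSP}(\mid)$ term. Indeed, given a finitary refusal set $F$, Lemma~\ref{Refusal_accounting} supplies sets $V_{1},\ldots,V_{n} \subseteq \futF(\varepsilon)$, including $\futF(\varepsilon)$, with $(\varepsilon,W) \in F$ iff $W \cap V_{i} = \emptyset$ for some $i$; setting $F_{a} = \{(w,W) \mid (aw,W) \in F\}$ one has
\[ F = \bigsqcap_{i}\bigbox_{a \in V_{i}} a\myaction F_{a}. \]
Each inner term $\bigbox_{a \in V_{i}} a\myaction F_{a}$ is a genuine \emph{deterministic} external choice, since $V_{i}$ is a set of pairwise distinct actions, so it is an instance of the constructor $\bigbox_{\vec a}$; the outer $\bigsqcap_{i}$ is iterated binary internal choice; and $\mystop$ is the empty deterministic external choice $\bigbox_{\vec a}()$. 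Hence the whole expression lies in the signature of $\mathrm{CSP}(\mid)$, and the induction on the length of the longest trace of $F$ runs exactly as before, the $F_{a}$ having strictly shorter longest traces. Definability gives surjectivity of $h$.

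For injectivity I would appeal to the preceding theorem, that $\mathnormal{\dfR}$ is complete for equations between closed $\mathrm{CSP}(\mid)$ terms: if $h$ identified the denotations of two closed terms $t,u$ in $\mathrm{I}$, then $t$ and $u$ would denote the same element of $\mathnormal{\dfR}$, whence $t = u$ is provable in $\mathrm{CSP}(\mid)$ and so $t$ and $u$ already denote the same element of $\mathrm{I}$. Thus $h$ is injective, and the proof is complete. I do not expect a real obstacle: the argument is a direct transcription of Theorem~\ref{Initial1}, and both of its ingredients are in hand---completeness from the theorem just proved, and definability from the remark that the formula of Lemma~\ref{Definability} needs only internal choice and deterministic external choice. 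The single point deserving care is precisely that remark: one must check that no occurrence of binary external choice, nor of a bare action prefix outside a deterministic choice, is required, which holds because the actions summed in each $\bigbox_{a \in V_{i}}$ are distinct.
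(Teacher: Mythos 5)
Your proposal is correct and follows essentially the same route as the paper: the paper likewise observes that the defining expression from Lemma~\ref{Definability} is already a closed $\mathrm{CSP}(\mid)$ term (since each $\bigbox_{a \in V_{i}}$ is a deterministic external choice), and then derives initiality from the completeness theorem exactly as in Theorem~\ref{Initial1}. Your extra care about checking that no bare action prefix or binary external choice is needed is precisely the point the paper leaves implicit.
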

\begin{proof} Following the proof of Lemma~\ref{Definability} we see that every {\finitary} refusal set is definable by a closed $\mathrm{CSP}(\mid)$ term. With that, initiality follows from the above completeness theorem, as in the proof of Theorem~\ref{Initial1}.
\qed \end{proof}

Turning to the deconstructors, relabelling again has a straightforward homomorphic definition:  given a relabelling function  $f\type A \rightarrow A$, 
$h_{\mathnormal{Rl}}\type  T_{\mathrm{CSP}(\mid)}(\emptyset) \rightarrow T_{\mathrm{CSP}(\mid)}(\emptyset)$ is defined homomorphically by:
\[h_{\mathnormal{Rl}}(F \sqcap G) = h_{\mathnormal{Rl}}(F) \sqcap h_{\mathnormal{Rl}}(G)\]
\[h_{\mathnormal{Rl}}(\bigbox_{i} a_{i}\boxaction F_{i}) = \bigbox_{i} f(a_{i})\boxaction h_{\mathnormal{Rl}}(F_{i})\]
As always one has to check that the implied algebra satisfies the equations, here those of $\mathrm{CSP}(\mid)$.

There is also now a natural homomorphic definition of concealment, $- \backslash a$, but, surprisingly perhaps, one needs to assume that $\boxempty$ is available. For every $a \in A$ one defines $h_{a} \type  T_{\mathrm{CSP}(\mid)}(\emptyset) \rightarrow T_{\mathrm{CSP}(\mid)}(\emptyset)$ homomorphically by:
\[h_{a}(F \sqcap G) = h_{a}(F) \sqcap h_{a}(G)\]
\[h_{a}\left(\bigbox_{i = 1}^{n} a_{i}\boxaction F_{i}\right) =  \left\{
\begin{array}{ll}
h_{a}(F_{j})  \sqcap (h_{a}(F_{j}) \boxempty \bigbox_{i \neq j} a_{i}\boxaction h_{a}(F_{i})) & (\mbox{if  } a = a_{j},\, \mbox{where} \,1 \leq j \leq n)\\
\bigbox_{i = 1}^{n} a_{i}\boxaction h_{a}(F_{i}) & (\mbox{if  } a \neq \mbox{any} \; a_{i})
\end{array}
\right.
\]
Verifying that the implicit algebra obeys satisfies the required equations is quite a bit of work. We record the result, but omit the calculations:
\begin{proposition} \label{Conceal} One can define a $\mathrm{CSP}(\mid)$-algebra $\mathnormal{Con}$ on $T_{\mathrm{CSP}(\mid)}(\emptyset)$ by:  
\[F \sqcap_{\mathnormal{Con}} G  = F \sqcap G\]
\begin{center}
$(\bigbox_{\vec{a}})_{\mathnormal{Con}}(F_{1},\ldots,F_{n})  =  \left\{
\begin{array}{ll}
F_{j}  \sqcap (F_{j} \boxempty \bigbox_{i \neq j} a_{i}\boxaction F_{i}) & (\mbox{if  } a = a_{j})\\
\bigbox_{i} a_{i}\boxaction F_{i} & (\mbox{if  } a \neq \mbox{any} \; a_{i})
\end{array}
\right.
$
\end{center}
\end{proposition}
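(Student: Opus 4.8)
The plan is to reduce everything to equational reasoning in the ordinary $\mathrm{CSP}(\boxempty)$ theory. The key observation is that the carrier of $\mathnormal{Con}$ is the set of {\finitary} refusal sets, which --- when equipped with the operations $\mystop$, the prefixes $a\myaction\text{-}$, $\sqcap$ and $\boxempty$ --- is exactly the $\mathrm{CSP}(\boxempty)$-algebra $\mathnormal{\fR}$ of Theorem~\ref{Initial1}, and so satisfies every $\mathrm{CSP}(\boxempty)$-provable equation, including the derived equations~(\ref{helpful}) and~(\ref{also helpful}). Moreover each operation of $\mathnormal{Con}$ is given by a $\mathrm{CSP}(\boxempty)$-term over this algebra: $\sqcap_{\mathnormal{Con}}$ is just $\sqcap$, while $(\bigbox_{\vec a})_{\mathnormal{Con}}$ is, by definition, either the iterated external choice $\bigbox_{i} a_i\boxaction F_i$ (when $a$ is none of the $a_i$) or $F_j \sqcap (F_j \boxempty \bigbox_{i \neq j} a_i\boxaction F_i)$ (when $a = a_j$). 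Since this case split depends only on the operation symbol and the fixed action $a$, and not on the arguments, for any given choice of actions a $\mathrm{CSP}(\mid)$-term interpreted in $\mathnormal{Con}$ unfolds to a $\mathrm{CSP}(\boxempty)$-term interpreted in $\mathnormal{\fR}$. Hence to prove that $\mathnormal{Con}$ is a $\mathrm{CSP}(\mid)$-algebra it suffices, for each axiom of $\mathrm{CSP}(\mid)$, to translate it through these definitions and then derive the resulting $\mathrm{CSP}(\boxempty)$-(in)equation from the $\mathrm{CSP}(\boxempty)$ axioms.

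Three of the four groups of axioms are then immediate or short. The semilattice laws for $\sqcap_{\mathnormal{Con}}$ coincide verbatim with those for $\sqcap$ in $\mathnormal{\fR}$. Invariance of $(\bigbox_{\vec a})_{\mathnormal{Con}}$ under permutations follows from commutativity and associativity of $\boxempty$: a permutation either leaves the distinguished index $j$ (with $a_j = a$) fixed and merely reorders the remaining prefixes inside $\bigbox_{i \neq j} a_i\boxaction F_i$, or, in the case where $a$ is none of the $a_i$, simply reorders the prefixes of an ordinary external choice. For distributivity of deterministic external choice over $\sqcap$ one splits on whether the first (split) argument carries the concealed action; writing $R = \bigbox_{i \neq 1} a_i\boxaction F_i$, the case $a = a_1$ reduces to $F\sqcap F' \sqcap ((F\sqcap F')\boxempty R) = F \sqcap (F\boxempty R)\sqcap F' \sqcap (F'\boxempty R)$, which follows from distributivity of $\boxempty$ over $\sqcap$ together with idempotence and commutativity of $\sqcap$; the remaining cases are handled the same way, additionally using distributivity of the prefix operations over $\sqcap$.

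The main obstacle is the final axiom~(\ref{final axiom}). Here both the hypothesis choices $\bigbox_i a_i\boxaction x_i$ and $(b_1\myaction y_1)\boxempty\bigbox_{j\geq 2} b_j\boxaction y_j$, and the conclusion $(b_1\myaction y_1)\boxempty\bigbox_i a_i\boxaction x_i$, must each be re-expressed through the definition of $(\bigbox_{\vec a})_{\mathnormal{Con}}$, and the outcome depends on how the concealed action $a$ meets the action sequences involved. I would organise the verification by cases according to whether $a$ equals $b_1$, equals some $a_i$, or equals none of them, while also keeping track (via the convention~(\ref{convention})) of possible coincidences between $b_1$ and the $a_i$, which force internal-choice mergers of the corresponding continuations. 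In each case the translated statement is an inequation between concrete $\mathrm{CSP}(\boxempty)$-terms, which I expect to discharge using~(\ref{helpful}) and~(\ref{also helpful}) --- exactly the derived laws that made the completeness proof work --- together with the monotonicity of $\boxempty$ and of the prefix operations with respect to $\sqsubseteq$. This bookkeeping across the cases is the bulk of the effort the authors allude to; none of the individual derivations is deep, but there are many of them and the coincidence subcases are easy to mishandle.
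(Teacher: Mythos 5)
The paper offers no proof to compare against: it explicitly says that verifying that the implicit algebra satisfies the required equations ``is quite a bit of work'' and omits the calculations, so your proposal has to be judged on its own merits. Your central reduction is sound: $\mathnormal{Con}$ is a polynomial (derived-operation) algebra over $\mathnormal{\fR}$ --- each $(\bigbox_{\vec{a}})_{\mathnormal{Con}}$ is given by a fixed $\mathrm{CSP}(\boxempty)$-term whose shape depends only on $\vec{a}$ and the concealed action, not on the arguments --- so every $\mathrm{CSP}(\mid)$-axiom translates into an open $\mathrm{CSP}(\boxempty)$-(in)equation to be verified over $\mathnormal{\fR}$. The parts you actually carry out (semilattice laws, permutation invariance, distributivity over $\sqcap$) are correct as stated. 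For Axiom~(\ref{final axiom}) you give only a plan, but spot-checking the cases confirms it goes through with exactly the tools you name: for instance, when the concealed action is some $a_j$ but none of the $b$'s, the required inequation reduces, after unfolding, to an instance of~(\ref{helpful}) applied with $x = x_j \boxempty \bigbox_{i\neq j} a_i\boxaction x_i$ and $y = b_1\myaction y_1$; and when the concealed action is $b_1$ one uses the consequence $x \sqcap y \sqsubseteq x \boxempty y$ of~(\ref{helpful}) (take $z = \mystop$). One remark on method: you commit to deriving the translated \emph{open} inequations schematically from the $\mathrm{CSP}(\boxempty)$ axioms. That is sufficient but not necessary --- satisfaction of the axioms by $\mathnormal{Con}$ only requires validity under all assignments of carrier elements, so if some coincidence subcase resisted a schematic derivation you could fall back on Lemma~\ref{Definability} together with Proposition~\ref{completeness}, or simply verify the translated inequations directly as inclusions of failure sets, which in the cases above is a one-line computation. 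With that caveat, there is no gap of substance in your approach; what remains is precisely the case-by-case bookkeeping that the paper itself declines to print.
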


The operator $\boxempty$ is, of course, no longer available as a constructor. However, it can alternatively be treated as a binary deconstructor. While its treatment as such is  no more successful than our treatment of parallel operators, it is also  no less successful.  We define it simultaneously with $(n\mathord+1)$-ary functions 
$\boxempty^{a_{1}\ldots a_{n}}$ on $T_{\mathrm{CSP}(\mid)}(\emptyset)$, for $n\geq 0$, where the $a_{i}$ are all distinct. That we are defining infinitely many functions simultaneously arises from dealing with the infinitely many deterministic choice operators (there would be be infinitely many even if we considered them as parameterised on the $a$'s). However, we anticipate that this will cause no real difficulty, given that we have overcome the difficulty of dealing with binary deconstructors.   

Here are the required definitions:
\begin{eqnarray} \label{box-def} (F \sqcap F') \boxempty G & = &   (F \boxempty G) \sqcap (F' \boxempty G)\nonumber\\
(\bigbox_{i} a_{i}\boxaction F_{i}) \boxempty G  & = &  (F_{1},\ldots, F_{n})\boxempty^{a_{1}\ldots a_{n}} G\nonumber\\
&&\nonumber\\
(F_{1},\ldots, F_{n})\boxempty^{a_{1}\ldots a_{n}} (G \sqcap G')  & = &   ((F_{1},\ldots, F_{n})\boxempty^{a_{1}\ldots a_{n}} G) \sqcap ((F_{1},\ldots, F_{n})\boxempty^{a_{1}\ldots a_{n}} G')\nonumber\\
(F_{1},\ldots, F_{n})\boxempty^{a_{1}\ldots a_{n}} (\bigbox_{j} b_{j}\boxaction G_{j}) & = & 
   (a_{1}\myaction F_{1}) \boxempty (\ldots  ((a_{n}\myaction F_{n}) \boxempty \bigbox_{j} b_{j}\boxaction G_{j})\dots)
   \end{eqnarray}
where, in the last equation, the notational  convention
$(a_1\myaction t_1) \boxempty  \bigbox_{i= 2}^{n} a_i\boxaction t_i$ {is used} $n$ times.
It is clear that $\boxempty$ together with the functions 
\[\boxempty^{a_{1}\ldots a_{n}}\type T_{\mathrm{CSP}(\mid)}(\emptyset)^{n+1}\rightarrow T_{\mathrm{CSP}(\mid)}(\emptyset)\]
 defined by:
\begin{equation} \label{box-super-def} \boxempty^{a_{1}\ldots
      a_{n}}(F_{1},\ldots,F_{n},G)  = (\bigbox_{i} a_{i}\boxaction F_{i}) \boxempty G
\end{equation}
satisfy the equations, and, using the fact that all {\finitary} refusal sets are definable by normal forms, one sees that they are the unique such functions.

We can treat the CSP parallel operator $\mypar$ in a similar vein following the pattern given above for parallel merge operators in the case of synchronisation trees.
We define it simultaneously with $(n\mathord+1)$-ary functions 
$\mypar^{a_{1}\ldots a_{n}}$ on $T_{\mathrm{CSP}(\mid)}(\emptyset)$, for $n\geq 0$, where the $a_{i}$ are all distinct:
\begin{eqnarray}  \label{par-def}(F \sqcap F') \mypar G & = &  (F \mypar G) \sqcap (F' \mypar G)\nonumber\\
 (\bigbox_{i} a_{i}\boxaction F_{i}) \mypar G & = & (F_{1},\ldots, F_{n})\mypar^{a_{1}\ldots a_{n}} G\nonumber\\
 \nonumber\\
 (F_{1},\ldots, F_{n})\mypar ^{a_{1}\ldots a_{n}} (G \sqcap G') & = &  ((F_{1},\ldots, F_{n})\mypar^{a_{1}\ldots a_{n}} G) \sqcap ((F_{1},\ldots, F_{n})\boxempty^{a_{1}\ldots a_{n}} G')\nonumber\\
 (F_{1},\ldots, F_{n})\mypar^{a_{1}\ldots a_{n}} (\bigbox_{j} b_{j}\boxaction G_{j}) & = & 
   \bigbox_{a_{i}  =  b_{j}} a_{i}\boxaction(F_{i}\mypar G_{j})
\end{eqnarray}
Much as before,  $\mypar$ together with the functions $\mypar^{a_{1}\ldots a_{n}}\type T_{\mathrm{CSP}(\mid)}(\emptyset)^{n+1}\rightarrow T_{\mathrm{CSP}(\mid)}(\emptyset)$ defined by:
\[\mypar^{a_{1}\ldots a_{n}}(F_{1},\ldots,F_{n},G)=
(\bigbox_{i} a_{i}\boxaction F_{i}) \mypar G\]
are the unique functions satisfying the equations.

Finally we consider the CSP interleaving operator $\myinterleave$. We define this by following an idea, exemplified in the ACP literature~\cite{BK85,BK86}, of splitting an associative operation into several parts. Here we split $\myinterleave$ into a \emph{left interleaving} operator $\myinterleave^{l}$ and a  \emph{right interleaving} operator $\myinterleave^{r}$ so that: 
\[F \myinterleave G =  (F \myinterleave^{l} G) \boxempty (F \myinterleave^{r} G)\]
In ACP the parallel operator is split into three parts: a left merge, a right merge (defined in terms of the left merge), and a communication merge; in a subtheory, PA, there is no communication, and the parallel operator, now an interleaving one, is split into left and right parts~\cite{BK86}. The idea of splitting an associative operation into several operations can be found in a much wider context~\cite{EFG08} where the split into two  or three parts is axiomatised by the respective notions of dendriform dialgebra and trialgebra.

 Our left and right interleaving are defined by the following  `binary deconstructor' equations:
\begin{eqnarray}  \label{inter-def}
(F \sqcap F') \myinterleave^{l}  G & = &  (F \myinterleave^{l}  G) \sqcap (F' \myinterleave^{l}  G)\nonumber\\
(\bigbox_{i = 1}^{n} a_{i}\boxaction F_{i}) \myinterleave^{l}  G & = &  \bigbox_{i} a_{i}\boxaction ((F_{i} \myinterleave^{l}  G) \boxempty (F_{i} \myinterleave^{r}  G))\nonumber\\
\nonumber\\
 G \myinterleave^{r} (F \sqcap F')  & = &  (G \myinterleave^{r} F) \sqcap (G \myinterleave^{r} F')\nonumber\\
G \myinterleave^{r} (\bigbox_{i = 1}^{n} a_{i}\boxaction F_{i}) & = & 
\bigbox_{i} a_{i}\boxaction ((G \myinterleave^{l}  F_{i}) \boxempty (G \myinterleave^{r}  F_{i})) 
\end{eqnarray}

As may be expected, these equations also have unique solutions, now given by:
\[\begin{array}{lcl}
F \myinterleave^{l} G & = & \{(\varepsilon,W) \mid (\varepsilon,W) \in
F\} \cup \{ (w, W) \mid (u,W) \in F,~  (v,W) \in G,~ w \in u\!\mid^{l}
\!v\}\\
F \myinterleave^{r} G & = &  \{(\varepsilon,W) \mid (\varepsilon,W)
\in G\} \cup  \{ (w, W) \mid (u,W) \in F,~ (v,W) \in G,~  w \in
u\!\mid^{r} \!v\} 
 \end{array}\]
 where $u\!\mid^{l} \!v$  is the set of interleavings of $u$ and $v$ which
 begin with  a letter of $u$, and $u\!\mid^{r} \!v$ is defined analogously.
It is interesting to note that:
\[F \myinterleave^{l} (G \sqcap G') =  (F \myinterleave^{l} G) \sqcap (F \myinterleave^{l}  G')\]
and similarly for $\myinterleave^{r}$.

\section{Adding divergence}\label{divergence}

The treatment of CSP presented thus far dealt with finite
divergence-free processes only. There are several ways to extend the
refusal sets model of Section~\ref{CSPconstructors1} to infinite
processes with divergence. The most well-known model is the
\emph{failures/divergences} model of \cite{Hoa85}, further elaborated in
\cite{Ros98}. A characteristic property of this model is that
divergence, i.e.,  an infinite sequence of internal actions, is
modelled as {\it Chaos}, a process that satisfies the equation:
\begin{eqnarray} \label{chaos}
\textit{Chaos} \boxempty x & = &  \textit{Chaos} \sqcap x = \textit{Chaos}
\end{eqnarray}
So after {\it Chaos} no further process activity is discernible.

An alternative extension is the \emph{stable failures} model proposed
in \cite{BKO87}, and also elaborated in \cite{Ros98}. This model
equates processes that allow the same \emph{observations}, where actions
and deadlock are considered observable, but divergence does not give
rise to any observations. A failure pair $(w,W)$---now allowing
  $W$ to be infinite---records an
observation in which $w$ represents a sequence of actions being
observed, and $W$ represents the observation of deadlock under the
assumption that the environment in which the observed process is
running allows {only} the (inter)actions in the set $W$. Such an
observation can be made if after engaging in the sequence of visible
actions $w$, the observed process reaches a state in which no further
internal actions are possible, nor any actions from the set $W$.
Besides failure pairs, also traces are observable, and thus the
observable behaviour of a process is given by a pair $(T,F)$ where
$T$ is a set of traces and $F$ is a set of failure pairs.
Unlike the model $\fR$ of Section~\ref{CSPconstructors1}, the traces
are not determined by the failure pairs. In fact, in a process that
can diverge in every state, the set of failure pairs is empty, yet the
set of traces conveys important information.

In the remainder of this paper we add a constant $\Omega$ to the
signature of CSP that is a zero for the semilattice generated by $\sqcap$.
This will greatly facilitate the forthcoming development.
Intuitively, one may think of $\Omega$ as divergence in the stable
failures model.

W.r.t.\ the equational theory  $\mathrm{CSP}(\boxempty)$ of
Section~\ref{CSPconstructors1} we thus add the constant $\Omega$ and
the single axiom:
\vspace{-1em}
\begin{eqnarray} \label{Omega} x \sqcap \Omega = x
\end{eqnarray}
thereby obtaining the theory $\mathrm{CSP}(\boxempty,\Omega)$.
We note two useful derived equations:
\begin{eqnarray}  \label{Omega-derived}
x \sqcap (\Omega \boxempty y) & = &  x \sqcap (x \boxempty y)\nonumber\\
(\Omega \boxempty x) \sqcap (\Omega \boxempty y) & = & (\Omega \boxempty x) \boxempty (\Omega \boxempty y) 
\end{eqnarray}

Semantically, a \emph{process} is now given by a pair $(T,F)$, where $T$ is a set of traces and $F$ is a set of failure pairs that satisfy the following conditions:
\begin{enumerate}\parskip 0pt
\item $\varepsilon \in T$
\item $wa \in T \Rightarrow  w \in T$
\item $(w,W) \in F \Rightarrow w \in T$
\item $(w,W) \in F \wedge V \subseteq W \Rightarrow (w,V) \in F$
\item $(w,W) \in F \wedge \forall a \in V.\, wa \notin T \Rightarrow (w, W \cup V) \in F
\qquad\qquad (\mbox{where $V \subseteq A$})$
\end{enumerate}
The two components of such a pair $P$ are denoted $T_P$ and $F_P$,
respectively, and for $w \in T_P$ we define $\futP(w) := \{a \in A \mid wa\in T_P\}$.
We can define the CSP operators on processes by setting 
$$P \mathrel{\mathrm{op}} Q = (P \mathrel{\mathrm{op}_\T} Q, P \mathrel{\mathrm{op}_\R} Q)$$
where $\mathrm{op}_\T$ is given by:
\[\begin{array}{lcl}
	\mystop_{\mathnormal{\T}}& = &\{\varepsilon\}\\
	a\myactionsub_{\mathnormal{\T}} P & = & \{\varepsilon\} \cup \{aw \mid w \in T_P\}\\
	 P \sqcap_{\mathnormal{\T}} Q & = & T_P \cup T_Q\\
     P \boxempty_{\mathnormal{\T}} Q & = & T_P \cup T_Q\\
f_{\T}(P)  & = &  \{f(w) \mid w \in T_P\}\\
P\backslash_{\T} a  & =  &  \{w \backslash a \mid w \in T_P\}\\
P \mypar_{\T} Q & = & \{ w \mid w \in T_P,~ w \in T_Q\}\\
P \myinterleave_{\T} Q & = & \{ w \mid u \in T_P,~ v \in T_Q,~ w \in u\!\mid \!v\}
 \end{array}\]
and {$\mathrm{op}_\R$ is given as $\mathrm{op}_\fR$} was in Section~\ref{CSPconstructors1},
but without the restriction to finite sets $W$ in defining $\mystop_\R$.
For the new process $\Omega$ we set
\[\Omega_\T = \{\varepsilon\} 
\qquad \mbox{and} \qquad
\Omega_\R = \emptyset\]
This also makes the collection of processes into a
$\mathrm{CSP}(\boxempty,\Omega)$-algebra, $\F$.

A process $P$ is called \emph{finitary} if $T_P$ is finite.
The {\finitary} processes evidently form a subalgebra of
$\mathnormal{\F}$; we call it $\mathnormal{\fF}$.

\begin{lemma} \label{Refusal_accounting_Omega} Let $P$ be a
  {\finitary} process. Then, for every  $w \in T_P$ there is an $n \geq 0$ and
  $V_{1}, \ldots, V_{n} \subseteq \futF(w)$
  such that
 $(w,W) \in F_P$ iff $W \cap V_{i} = \emptyset$ for some $i \in \{1, \ldots, n\}$.
\end{lemma}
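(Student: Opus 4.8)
The plan is to adapt the proof of Lemma~\ref{Refusal_accounting} almost verbatim. The one genuinely new phenomenon is that in this model the traces are no longer determined by the failure pairs, so for $w \in T_P$ the pair $(w,\emptyset)$ need not belong to $F_P$; this is exactly what forces the statement to allow $n = 0$ (rather than $n \geq 1$ with $\futP(w)$ always among the $V_i$, as in Lemma~\ref{Refusal_accounting}). Throughout I would work with the finite set $\futP(w)$ of futures of $w$ in $P$, which plays here the role that $\futF(w)$ played in Section~\ref{CSPconstructors1}.

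First I would reduce, as before, to failure pairs $(w,W)$ with $W \subseteq \futP(w)$. Downward closure (condition~4) gives $(w,W) \in F_P \Rightarrow (w, W \cap \futP(w)) \in F_P$. For the converse I would invoke the new extension condition~5 with $V := W \setminus \futP(w)$: every $a \in V$ satisfies $a \notin \futP(w)$, i.e.\ $wa \notin T_P$, so from $(w, W \cap \futP(w)) \in F_P$ condition~5 yields $(w, (W\cap\futP(w)) \cup V) \in F_P$, that is $(w,W) \in F_P$. Hence $(w,W) \in F_P$ iff $(w, W \cap \futP(w)) \in F_P$, and it suffices to characterise membership for $W \subseteq \futP(w)$.

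Since $P$ is finitary, $T_P$ and therefore $\futP(w)$ are finite, so only finitely many pairs $(w,W) \in F_P$ have $W \subseteq \futP(w)$. For each such $W$ I would place $V := \futP(w) \setminus W$ into the list $V_1,\ldots,V_n$, taking $n$ to be the number of such $W$. The biconditional then follows as in Lemma~\ref{Refusal_accounting}: if $W' \subseteq \futP(w)$ and $(w,W') \in F_P$, then $V = \futP(w)\setminus W'$ is one of the $V_i$ and $W' \cap V = \emptyset$; conversely, if $W' \cap V_i = \emptyset$ for some $i$, where $V_i = \futP(w)\setminus W_i$ with $(w,W_i) \in F_P$, then $W' \subseteq W_i$ (using $W' \subseteq \futP(w)$), whence downward closure (condition~4) gives $(w,W') \in F_P$.

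The main point to watch — and the only real difference from Lemma~\ref{Refusal_accounting} — is the degenerate case in which $F_P$ contains no failure pair with first component $w$, as happens for instance at $w = \varepsilon$ in the process $\Omega$, where $\Omega_\R = \emptyset$. In that case the list is empty, $n = 0$, and the condition ``$W \cap V_i = \emptyset$ for some $i$'' is vacuously false, correctly matching the absence of any such failure pair. This is precisely why the statement permits $n = 0$ and does not claim that $\futP(w)$ is among the $V_i$: in the divergence-free model the trace condition forced $(w,\emptyset) \in F$, but once $\Omega$ and the decoupling of traces from failures are admitted, that need no longer hold.
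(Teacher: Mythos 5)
Your proof is correct and follows essentially the same route as the paper's: reduce to $W \subseteq \futP(w)$ via closure conditions 4 and 5, then take $V_i = \futP(w)\setminus W_i$ for each of the finitely many relevant failure pairs, noting that $n=0$ can occur since $(w,\emptyset)$ need not lie in $F_P$. You simply spell out the biconditional and the degenerate case (e.g.\ $\Omega$) in more detail than the paper does.
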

\begin{proof} Closure conditions 4 and 5 above imply that
$(w,W) \in F_P$  if, and only if,    $(w,W \cap \futP(w)) \in F_P$.
Thus we only need to be concerned about pairs $(w,W)$ with
$W \subseteq \futP(w)$.
Now, as $\futP(w)$ is finite, for any relevant $(w,W) \in F$, of which there are finitely many, we
can  take $V$ to be $\futP(w) \backslash W$, and we obtain finitely many such sets.
\qed \end{proof}
Note that it may happen that $n=0$, in contrast with the case of Lemma~\ref{Refusal_accounting}.
\begin{lemma} \label{Definability_Omega}
All {\finitary} processes are definable by closed $\mathrm{CSP}(\boxempty,\Omega)$ terms.
\end{lemma}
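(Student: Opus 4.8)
The plan is to follow the proof of Lemma~\ref{Definability}, inducting on the length of the longest trace of the finitary process $P = (T_P, F_P)$, but now letting $\Omega$ supply exactly the expressive power that is missing once traces and failures are decoupled. As in that proof, for each $a \in \futP(\varepsilon)$ I would set $T_{P_a} = \{w \mid aw \in T_P\}$ and $F_{P_a} = \{(w,W) \mid (aw,W) \in F_P\}$; one checks routinely that each $P_a$ is again a finitary process and that its longest trace is strictly shorter than that of $P$, so the induction hypothesis applies to it.

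By Lemma~\ref{Refusal_accounting_Omega} there are $V_1, \ldots, V_n \subseteq \futP(\varepsilon)$ (with possibly $n=0$) such that $(\varepsilon, W) \in F_P$ iff $W \cap V_i = \emptyset$ for some $i$. The crucial difference from the divergence-free case is that $\futP(\varepsilon)$ itself need not be among the $V_i$; in particular when $n=0$ the empty trace carries no failure at all, which is the semantic signature of divergence. I would therefore propose the term
\[P = \Bigl(\Omega \boxempty \bigbox_{a \in \futP(\varepsilon)} a\myaction P_a\Bigr) \;\sqcap\; \bigsqcap_{i=1}^{n} \bigbox_{a \in V_i} a\myaction P_a,\]
reading the empty $\sqcap$ (case $n=0$) as $\Omega$, which is legitimate since axiom (\ref{Omega}) makes $\Omega$ the unit for $\sqcap$, and the empty external choice as $\mystop$. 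The point of the first conjunct is that $\Omega \boxempty (-)$ preserves all traces of its argument but deletes every failure pair at $\varepsilon$: computing $\boxempty_{\T}$ and $\boxempty_{\R}$ against $\Omega_\T = \{\varepsilon\}$, $\Omega_\R = \emptyset$ shows it contributes exactly $\futP(\varepsilon)$ to the initial traces while adding no $\varepsilon$-failure. The remaining conjuncts then install precisely the refusals $\{(\varepsilon,W) \mid W \cap V_i = \emptyset\}$ via $\sqcap_{\R} = \cup$.

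To finish I would verify semantically that the two sides agree as processes, i.e.\ have equal trace sets and equal failure sets. For the traces this is immediate, since $\sqcap$ and $\boxempty$ both act as union on $T$, and since every $V_i \subseteq \futP(\varepsilon)$ the union of all initial-trace contributions is exactly $\futP(\varepsilon)$, recovering $T_P$. For the failures one checks the empty trace separately from the non-empty ones: at $\varepsilon$ the $\Omega$-conjunct contributes nothing and the $i$-th conjunct contributes $\{(\varepsilon,W) \mid W \cap V_i = \emptyset\}$, whose union is $F_P$ at $\varepsilon$ by Lemma~\ref{Refusal_accounting_Omega}; at a non-empty trace $aw$ every conjunct containing $a\myaction P_a$ contributes exactly $\{(aw,W)\mid (w,W)\in F_{P_a}\}$ and nothing else, so by the definition of $P_a$ the union is $\{(aw,W) \mid (aw,W)\in F_P\}$.

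I expect the main obstacle to be precisely this bookkeeping of traces versus failures: making sure the $\Omega$-conjunct carries the initial traces that lie outside $\bigcup_i V_i$ without silently re-introducing an $\varepsilon$-refusal (which would wrongly record $(\varepsilon,\emptyset)\in F$ and so destroy the modelling of divergence), and checking that the degenerate cases $n=0$ and $\futP(\varepsilon)=\emptyset$ collapse correctly to $\Omega$ and to $\mystop$ via the $\Omega$-as-unit and empty-$\boxempty$ conventions. The equations (\ref{Omega-derived}) are convenient for recognising that the first conjunct behaves as intended, but the argument is really a direct computation in the model rather than an equational derivation.
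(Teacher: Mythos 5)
Your proposal is correct and follows essentially the same route as the paper: the same induction on the length of the longest trace, the same derived processes $P_a$, the same appeal to Lemma~\ref{Refusal_accounting_Omega}, and the same defining term $\bigl(\bigsqcap_{i}\bigbox_{a \in V_{i}} a\myaction P_{a}\bigr) \sqcap \bigl(\Omega \boxempty \bigbox_{a} a\myaction P_{a}\bigr)$ up to reordering of the $\sqcap$-conjuncts. The only difference is that you spell out the semantic verification and the degenerate cases that the paper dismisses with ``it is not hard to see,'' which is fine.
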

\begin{proof} Let $P$ be a {\finitary} process. We proceed by
  induction on the length of the longest trace in $T_P$.
By the previous lemma  there are sets $V_{1}, \ldots, V_{n}$, for some
$n\geq 0$, such that
  $(\varepsilon,W) \in F$ iff $W \cap V_{i} = \emptyset$ for some $i \in \{1, \ldots, n\}$.
 Define $T_a$ and $F_{a}$, for $a \in T_P$, by:
  \[T_{a} = \{ w \mid aw \in T_P\}
\qquad
    F_{a} = \{ (w,W) \mid (aw,W) \in F_P\}\] 
  Then it is not hard to see that each $P_a := (T_a,F_{a})$ is a {\finitary} process, and that
  \[P = \left(\bigsqcap_{i}\bigbox_{a \in V_{i}} a\myaction P_{a}\right)
  ~\sqcap~ \left(\Omega \boxempty \bigbox_{a \in T_P} a\myaction P_{a}\right)\]
  As the longest trace in $T_{a}$ is strictly shorter than the longest
  one in $T_P$, the proof concludes, employing the induction hypothesis.
\qed \end{proof}
   
\begin{proposition} $\mathrm{CSP}(\boxempty,\Omega)$ is ground equationally
  complete with respect to both $\mathnormal{\F}$  and $\mathnormal{\fF}$.
\end{proposition}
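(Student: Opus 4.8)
The ``only if'' (soundness) direction is immediate: $\mathnormal{\F}$ and $\mathnormal{\fF}$ have already been exhibited as $\mathrm{CSP}(\boxempty,\Omega)$-algebras, so every equation provable from the theory holds in both. For the converse I first observe that the two completeness claims coincide on ground equations. Each constructor preserves finitariness ($\mystop$, $\Omega$ and the $a\myaction-$ have finite trace sets, and $\sqcap$, $\boxempty$ take unions of finite trace sets), so every closed term denotes a {\finitary} process. Hence for a ground equation $t=u$ its two sides have the same denotation in $\mathnormal{\F}$ precisely when they do in the subalgebra $\mathnormal{\fF}$, and it suffices to prove completeness with respect to $\mathnormal{\fF}$.

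The plan then follows the pattern of Proposition~\ref{completeness}. First I would fix a notion of \emph{normal form}, namely the recursive shape already produced in the proof of Lemma~\ref{Definability_Omega}: a term
\[ \Big(\bigsqcap_{i=1}^{n}\bigbox_{a\in V_i} a\myaction t_a\Big)\;\sqcap\;\Big(\Omega\boxempty\bigbox_{a\in T}a\myaction t_a\Big), \]
where $T$ is the (finite) set of initial actions, $n \geq 0$ (with the convention that for $n=0$ the first conjunct is absent), the $V_i\subseteq T$ form a saturated family in the sense of de Nicola, each $t_a$ is again in normal form, and the second, $\Omega$-headed, summand records exactly the trace set. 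I would then show that every closed $\mathrm{CSP}(\boxempty,\Omega)$ term is provably equal to such a normal form, by induction on term structure, following de Nicola's Proposition~A6 in~\cite{DeN85} but now also using axiom~(\ref{Omega}) and the derived equations~(\ref{Omega-derived}) to push $\Omega$ to the top and to absorb those traces that carry no failure information (the case $n=0$ of Lemma~\ref{Refusal_accounting_Omega}).

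The second, and main, step is to show that two normal forms with the same denotation in $\mathnormal{\fF}$ are syntactically identical. This I would prove by induction on the length of the longest trace, reading the normal-form data off the denotation $(T,F)$: the component $T$ is the trace set, its initial actions index the inner choices, the saturated family $\{V_1,\ldots,V_n\}$ is recovered, via Lemma~\ref{Refusal_accounting_Omega}, from the failure pairs $(\varepsilon,W)\in F$, and each subterm $t_a$ is pinned down by the residual process $P_a$ (as in Lemma~\ref{Definability_Omega}). This is the analogue, in the presence of $\Omega$, of Lemma~4.8 of~\cite{DeN85}; the one genuinely new point is that the traces are no longer determined by the failure pairs, so the $\Omega$-summand must be shown to encode $T$ faithfully, and the divergent nodes ($n=0$), where the failure family at $\varepsilon$ is empty, must be distinguished from the stable ones ($n\geq 1$). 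Establishing this canonicity is where essentially all the work lies.

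Granting these two steps, the proof concludes exactly as for Proposition~\ref{completeness} and Theorem~\ref{Initial1}: if a ground equation $t=u$ holds in $\mathnormal{\fF}$, reduce both sides to normal forms, which must then coincide, whence $t=u$ is provable; together with soundness this gives ground equational completeness with respect to $\mathnormal{\fF}$, and hence, by the first paragraph, also with respect to $\mathnormal{\F}$.
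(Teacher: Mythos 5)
Your overall strategy---reduce every closed term to a normal form, show normal forms are canonical, and conclude as in Proposition~\ref{completeness}---is exactly the paper's, and your observation that completeness for $\mathnormal{\F}$ and for $\mathnormal{\fF}$ coincide because closed terms denote {\finitary} processes is also how the paper passes between the two algebras. The gap is in your choice of normal form. You merge everything into the single shape
\[\Big(\bigsqcap_{i=1}^{n}\bigbox_{a\in V_i}a\myaction t_a\Big)\sqcap\Big(\Omega\boxempty\bigbox_{a\in T}a\myaction t_a\Big)\]
with $\{V_1,\ldots,V_n\}$ saturated in de Nicola's sense, i.e., relative to $\bigcup_iV_i$. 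Such normal forms are \emph{not} canonical, so the step ``two normal forms with the same denotation are identical'' fails. Concretely, $\mystop\sqcap(\Omega\boxempty b\myaction \mystop)$ and $\mystop\sqcap(b\myaction \mystop)\sqcap(\Omega\boxempty b\myaction \mystop)$ both fit your template---the families $\{\emptyset\}$ and $\{\emptyset,\{b\}\}$ are each saturated relative to their own unions---yet they denote the same process, namely $(\{\varepsilon,b\},\ \{(\varepsilon,W)\mid W\subseteq A\}\cup\{(b,W)\mid W\subseteq A\})$. The trouble is precisely the feature you flag as the ``genuinely new point'': since the $\Omega$-summand supplies the traces independently of the family, the failure component only determines the family up to saturation relative to $\bigcup_iV_i$, which need not equal $T$, so the de Nicola recovery argument no longer pins the family down.

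There are two repairs. The paper's: take normal forms to be \emph{either} $\bigsqcap_{L\in\mathcal{L}}\bigbox_{a\in L}a\myaction t_a$ \emph{or} $\Omega\boxempty\bigbox_{a\in K}a\myaction t_a$, two disjoint shapes distinguished semantically by whether $(\varepsilon,\emptyset)\in F$; in the first shape the family alone determines the trace set, so Lemma~4.8 of~\cite{DeN85} (here Lemma~\ref{equal normal forms}) applies essentially verbatim. Alternatively, keep your combined shape but require saturation relative to the whole of $T$ (so that $T\in\{V_1,\ldots,V_n\}$ whenever $n\geq 1$); then the family is recovered from the denotation as $\{L\subseteq T\mid(\varepsilon,T\setminus L)\in F\}$ and canonicity holds. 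With either fix the normalization step goes through using~(\ref{Omega}) and the derived equations~(\ref{Omega-derived}) as you describe, and the rest of your argument is sound.
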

\begin{proof} This time we recursively define a normal form as a
  $\mathrm{CSP}(\boxempty,\Omega)$-term of the form
\[ \bigsqcap_{L \in \mathcal{L}}\bigbox_{a \in L} a\myaction t_{a} 
\qquad \mbox{or} \qquad \Omega \boxempty \bigbox_{a \in K} a\myaction t_{a} \]
  where $\mathcal{L}$ is a finite non-empty saturated collection of finite sets of
  actions, $K$ is a finite set of actions, and each term $t_{a}$ is in normal form. 
  Every term is provably equal in $\mathrm{CSP}(\boxempty,\Omega)$ to
  a term in normal form;
  the proof proceeds as for Proposition~\ref{completeness},
  but now also using the derived equations~{(\ref{Omega-derived})}.
  Next, by Lemma~\ref{equal normal forms} below, if
  two normal forms have the same denotation in $\mathnormal{\F}$ then
  they are identical.
 So the result follows
 for $\mathnormal{\F}$, and then for $\mathnormal{\fF}$ too,
  as all closed terms denote {\finitary} processes.
\qed \end{proof}

\begin{theorem} \label{Initial1_Omega} The algebra $\mathnormal{\fF}$
  of finitary processes is the initial $\mathrm{CSP}(\boxempty,\Omega)$ algebra.
\end{theorem}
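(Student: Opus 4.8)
The plan is to follow the proof of Theorem~\ref{Initial1} essentially verbatim, since the two ingredients it needs---definability of every element by a closed term, and ground equational completeness---have both been established above for $\mathrm{CSP}(\boxempty,\Omega)$ and $\mathnormal{\fF}$. Let $\mathrm{I}$ be the initial $\mathrm{CSP}(\boxempty,\Omega)$-algebra. Because $\mathnormal{\fF}$ is a $\mathrm{CSP}(\boxempty,\Omega)$-algebra, there is a unique homomorphism $h\type \mathrm{I} \rightarrow \mathnormal{\fF}$, and the whole argument reduces to showing that $h$ is simultaneously surjective and injective, hence an isomorphism.

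For surjectivity I would appeal to Lemma~\ref{Definability_Omega}: every finitary process is the denotation of some closed $\mathrm{CSP}(\boxempty,\Omega)$-term. Since $h$ sends the element of $\mathrm{I}$ named by a closed term $t$ to the denotation of $t$ in $\mathnormal{\fF}$, and since every element of the initial algebra is named by some closed term (the initial algebra being generated by the operations from no variables), definability of every finitary process is exactly what makes $h$ onto.

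For injectivity I would invoke the preceding proposition, that $\mathrm{CSP}(\boxempty,\Omega)$ is ground equationally complete with respect to $\mathnormal{\fF}$. If $h$ identifies the images of two closed terms $t$ and $u$, then $t$ and $u$ have the same denotation in $\mathnormal{\fF}$; by completeness the ground equation $t = u$ is then provable, so $t$ and $u$ already denote the same element of $\mathrm{I}$. Thus $h$ is one-to-one, and being a bijective homomorphism it is an isomorphism, establishing initiality of $\mathnormal{\fF}$.

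I expect no genuine obstacle at this stage: all of the real work resides in the two results just cited, and the present statement is a formal corollary of definability together with ground completeness, exactly as in the divergence-free case of Theorem~\ref{Initial1}. The only places where the reasoning could have departed from that earlier one are where $\Omega$ intervenes---the second shape of normal form (the one headed by $\Omega \boxempty \cdots$) and the possibility $n = 0$ in Lemma~\ref{Refusal_accounting_Omega}---but these have already been absorbed into Lemma~\ref{Definability_Omega} and into the completeness proposition, so nothing new needs to be checked here.
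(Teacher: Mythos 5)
Your proposal matches the paper's own proof: both take the unique homomorphism $h$ from the initial algebra to $\mathnormal{\fF}$, derive surjectivity from Lemma~\ref{Definability_Omega} and injectivity from the ground-completeness proposition, and conclude that $h$ is an isomorphism. The argument is correct and is essentially verbatim the one given in the paper.
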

\begin{proof}  Let the initial such algebra be $\mathrm{I}$. There is
  a unique homomorphism $h \type \mathrm{I} \rightarrow
  \mathnormal{\fF}$. By Lemma~\ref{Definability_Omega}, $h$ is a surjection.
By the previous proposition, $\mathnormal{\fF}$ is complete for
equations between closed terms, and so $h$ is  an injection. Hence $h$ is
an isomorphism, completing the proof.
\qed \end{proof}

As in Section~\ref{CSPconstructors2}, in order to deal with deconstructors, particularly hiding, we replace external choice by deterministic external choice. The availability of $\Omega$ permits useful  additional such operators. The equational theory $\mathrm{CSP}(|,\Omega)$ has as signature the
binary operation symbol $\sqcap$, and for any deterministic action sequence
$\vec{a}$, the $n$-ary operation symbols $\bigbox_{\vec{a}}$ (as in
Section~\ref{CSPconstructors2}), as well as the new $n$-ary operation symbols
\raisebox{0pt}[0pt][0pt]{$\bigbox^\Omega_{\vec{a}}$},
for $n\geq 0$, which denote a
deterministic external choice with $\Omega$ as one of the summands.
We adopt conventions for \raisebox{0pt}[0pt][0pt]{$\bigbox^\Omega_{\vec{a}}$} analogous to those previously introduced for \raisebox{0pt}[0pt][0pt]{$\bigbox_{\vec{a}}(t_{1},\ldots,t_{n})$}. We write \raisebox{0pt}[0pt][0pt]{$\bigbox^\Omega_{\vec{a}}(t_{1},\ldots,t_{n})$}
as $\Omega\boxempty\bigbox_{i = 1}^{n} a_i\boxaction t_i$.
We also write $\Omega \boxempty
  (c_1\myaction t_1) \boxempty \bigbox_{j =2}^{n}c_{j}t_{j} $ for $\Omega \boxempty  \bigbox_{j =1}^{n}c_{j}t_{j} $, so that the $c_{j}$ ($j = 1,n$) must all be distinct. 

The first three groups of axioms of $\mathrm{CSP}(|,\Omega)$ are:
\begin{itemize}\parskip 0pt
\item $\sqcap,\Omega$ is a semilattice with a zero---here $\Omega$ is
  the 0-ary case of $\bigbox^\Omega_{\vec{a}}$,
\item both deterministic external choice operators $\bigbox_{\vec{a}}$
and \raisebox{0pt}[0pt][0pt]{$\bigbox^\Omega_{\vec{a}}$} are
commutative, as explained in Section~\ref{CSPconstructors2}, and
\item both deterministic external choice operators distribute over
  internal choice, as explained in Section~\ref{CSPconstructors2},
\end{itemize}
 
 Given commutativity, we are, as before, justified in writing
 deterministic external choices $\bigbox_{a \in I} a\boxaction t_{a}$
 or $\Omega \boxempty \bigbox_{a \in I} a\boxaction t_{a}$, over
 finite, possibly empty, sets of actions $I$, assuming some standard ordering of
pairs $(a,t_{a})$ without repetitions. Next, using the analogous convention to~(\ref{convention}) we can then also understand $\Omega \boxempty  \bigbox_{j =1}^{n}c_{j}t_{j} $, and so also $\Omega \boxempty
  (c_1\myaction t_1) \boxempty \bigbox_{j =2}^{n}c_{j}t_{j} $, even
  when the $c_{j}$ are not all distinct.
  With these conventions established, we can now state  the final group of axioms. These are all variants of
  Axiom (\ref{final axiom}) of Section~\ref{CSPconstructors2}, allowing
  each of the two deterministic external choices to have an $\Omega$-summand:
\begin{equation}
\left(\Omega \boxempty \bigbox_i a_i\boxaction x_i \right) \;\sqcap\; \left(\Omega \boxempty
  (b_1\myaction y_1) \boxempty \bigbox_{j =2}^{n}  b_j\boxaction y_j \right)
\;\,\sqsubseteq\;\,
 \Omega \boxempty (b_1\myaction y_1) \boxempty \bigbox_i a_i\boxaction x_i
\nonumber
\end{equation}
\begin{equation}
\left(\Omega \boxempty \bigbox_i a_i\boxaction x_i \right) \;\sqcap\; \left(
  (b_1\myaction y_1) \boxempty \bigbox_{j =2}^{n}  b_j\boxaction y_j \right)
\;\,\sqsubseteq\;\,
 \Omega \boxempty (b_1\myaction y_1) \boxempty \bigbox_i a_i\boxaction x_i
\nonumber
\end{equation}
\begin{equation}
\left(\bigbox_i a_i\boxaction x_i \right) \;\sqcap\; \left(\Omega \boxempty
  (b_1\myaction y_1) \boxempty \bigbox_{j =2}^{n}  b_j\boxaction y_j \right)
\;\,\sqsubseteq\;\,
  (b_1\myaction y_1) \boxempty \bigbox_i a_i\boxaction x_i
\nonumber
\end{equation}
\begin{equation}\label{final group}
\left(\bigbox_i a_i\boxaction x_i \right) \;\sqcap\; \left(
  (b_1\myaction y_1) \boxempty \bigbox_{j =2}^{n}  b_j\boxaction y_j \right)
\;\,\sqsubseteq\;\,
  (b_1\myaction y_1) \boxempty \bigbox_i a_i\boxaction x_i
\end{equation}
As in the case of Axiom~(\ref{final axiom}), restricting any of these choices to be deterministic results in an axiom of equivalent power.
 We note two useful derived equations:
 
 \begin{eqnarray} \label{derived-2}
  \bigbox_i a_i\boxaction x_i
 \sqcap
 (\Omega \boxempty \bigbox_j b_j\boxaction y_j)
&  = &
  \bigbox_i a_i\boxaction x_i
 \sqcap
 (\bigbox_i a_i\boxaction x_i \boxempty \bigbox_j b_j\boxaction y_j)\nonumber\\
 (\Omega \boxempty \bigbox_i a_i\boxaction x_i)
 \sqcap
(\Omega \boxempty \bigbox_j b_j\boxaction y_j)
 &  = &
 (\Omega \boxempty \bigbox_i a_i\boxaction x_i)  \boxempty \bigbox_j b_j\boxaction y_j
 \end{eqnarray}
where two further notational conventions are employed: 
 $ (\bigbox_{i = 1}^{m} a_i\boxaction t_i) \boxempty 
    (\bigbox_{j= 1}^{n} b_j\boxaction t'_j)$
stands for
 $ \bigbox_{k = 1}^{m+ n} c_k\boxaction t''_{k}$ 
  where $c_{k} = a_{k}$ and $t''_{k} = t_{k}$, for $k = 1,m$,
  and  $c_{k} = b_{k-m}$, and $t''_{k} = t'_{k-m}$, for $k = m\mathord+ 1,m \mathord+n$; 
 and  $ (\Omega \boxempty \bigbox_{i = 1}^{m} a_i\boxaction t_i) \boxempty 
    (\bigbox_{j= 1}^{n} b_j\boxaction t'_j)$ is understood analogously.
In fact, the first three axioms of (\ref{final group}) are also
derivable from (\ref{derived-2}), in the presence of the other
axioms, and thus may be replaced by (\ref{derived-2}).

The collection of processes is turned into a $\mathrm{CSP}(|,\Omega)$-algebra
$\dF$ as before, writing:
\[P \mathrel{\mathrm{op}_{\dF}} Q = (P \mathrel{\mathrm{op}_{\dT}} Q, P
 \mathrel{\mathrm{op}_{\dR}} Q)\]
and defining $\mathrm{op}_{\dT}$ and $\mathrm{op}_\dR$
in the evident way:
\[\begin{array}{lcl}
P \sqcap_{\mathnormal{\dT}} Q & = & T_P \cup T_Q\\
(\bigbox_{\vec{a}})_{\mathnormal{\dT}}(P_{1},\ldots,P_{n}) & = &
\{\varepsilon\} \cup \{a_{i}w \mid w \in T_{P_i}\}\\
(\bigbox^\Omega_{\vec{a}})_{\mathnormal{\dT}}(P_{1},\ldots,P_{n}) & = &
\{\varepsilon\} \cup \{a_{i}w \mid w \in T_{P_i}\}\\
(\bigbox^\Omega_{\vec{a}})_{\mathnormal{\dR}}(P_{1},\ldots,P_{n}) & = &
\{(a_{i}w,W) \mid (w,W) \in F_{P_i}\}
\end{array}\]
with $\sqcap_\dR$ and $(\bigbox_{\vec{a}})_{\mathnormal{\dR}}$
given just as in Section~\ref{CSPconstructors2}. 
Exactly as in Section~\ref{CSPconstructors2}, but now using the derived equations (\ref{derived-2}), we obtain:
\begin{theorem}  The algebra $\mathnormal{\dF}$ is complete for
  equations between closed $\mathrm{CSP}(|,\Omega)$ terms.
\end{theorem}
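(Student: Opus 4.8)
The plan is to follow the pattern of the three completeness proofs already given in the paper: introduce a notion of normal form in the signature of $\mathrm{CSP}(|,\Omega)$, show that every closed term is provably equal to one, and then argue that distinct normal forms receive distinct denotations in $\mathnormal{\dF}$. The point of the remark preceding the statement is that the only substantive change from the $\mathrm{CSP}(\mid)$ case of Section~\ref{CSPconstructors2} is the presence of $\Omega$ and of the two flavours of deterministic external choice, and that the derived equations~(\ref{derived-2}) are exactly what is needed to absorb this change into the induction.

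First I would fix the normal forms. Since $\mathrm{CSP}(|,\Omega)$ has both the plain operators $\bigbox_{\vec a}$ and the $\Omega$-augmented ones $\bigbox^\Omega_{\vec a}$, I can reuse verbatim the normal forms of the $\mathrm{CSP}(\boxempty,\Omega)$ case, namely terms of one of the two shapes
\[ \bigsqcap_{L \in \mathcal{L}}\bigbox_{a \in L} a\myaction t_{a} \qquad\mbox{or}\qquad \Omega \boxempty \bigbox_{a \in K} a\myaction t_{a},\]
with $\mathcal{L}$ a finite non-empty saturated collection of finite action sets, $K$ a finite action set, and each $t_{a}$ again in normal form. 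The only difference is that the second shape is now literally a term built from the new operators $\bigbox^\Omega_{\vec a}$ rather than an abbreviation involving $\boxempty$.

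Next I would prove that every closed $\mathrm{CSP}(|,\Omega)$ term reduces to a normal form by a straightforward structural induction, mirroring Proposition~\ref{completeness} but using~(\ref{derived-2}) in place of~(\ref{Omega-derived}). The semilattice and commutativity axioms let me flatten nested $\sqcap$'s and reorder the summands of each external choice; distributivity of both external choice operators over $\sqcap$ lets me push $\sqcap$ outward; and the two equations of~(\ref{derived-2}) do the essential work of reconciling the two shapes of normal form, so that an internal choice of normal forms of mixed shapes collapses back to a single normal form of one shape. Saturation of $\mathcal{L}$ is then obtained exactly as before, using the final group of axioms~(\ref{final group}). Finally, for uniqueness, I would observe that a $\mathrm{CSP}(|,\Omega)$ normal form is, on unfolding $\bigbox_{\vec a}$ and $\bigbox^\Omega_{\vec a}$ to their intended meanings, precisely a normal form in the sense of the $\mathrm{CSP}(\boxempty,\Omega)$ completeness proposition, and that its denotation in $\mathnormal{\dF}$ agrees componentwise, on traces and on failure pairs, with its denotation in $\mathnormal{\F}$. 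Hence two $\mathrm{CSP}(|,\Omega)$ normal forms with equal denotation in $\mathnormal{\dF}$ yield two $\mathrm{CSP}(\boxempty,\Omega)$ normal forms with equal denotation in $\mathnormal{\F}$, which by Lemma~\ref{equal normal forms} are syntactically identical; unwinding the correspondence shows the originals identical. Completeness follows as in the earlier proofs: reduce both closed terms to normal form, note the normal forms have equal denotation, conclude they coincide, and chain the provable equalities.

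I expect the main obstacle to be the reduction step, and within it the bookkeeping needed to merge the two shapes of normal form under $\sqcap$: one must check that~(\ref{derived-2}), together with distributivity and commutativity, genuinely suffices both to absorb an $\Omega$-free external choice into an $\Omega$-headed one and to re-establish saturation of the resulting index collection $\mathcal{L}$. This is the point at which the extra $\Omega$-summands make the induction more delicate than in the $\Omega$-free setting of Section~\ref{CSPconstructors2}, although the derived equations~(\ref{derived-2}) are precisely designed to handle it, so no genuinely new idea beyond those proofs should be required.
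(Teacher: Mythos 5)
Your proposal matches the paper's intended argument: the paper gives no separate proof here, merely noting that one proceeds ``exactly as in Section~\ref{CSPconstructors2}, but now using the derived equations~(\ref{derived-2}),'' i.e., reduce every closed term to one of the two normal forms of the $\mathrm{CSP}(\boxempty,\Omega)$ completeness proposition and then appeal to Lemma~\ref{equal normal forms} for the identity of equi-denotational normal forms. Your fleshed-out version, including the observation that denotations agree componentwise with those in $\mathnormal{\F}$ and that~(\ref{derived-2}) does the work of merging the two shapes under $\sqcap$, is exactly the argument the authors have in mind.
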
 
\begin{theorem} \label{Initial2_Omega} The {\finitary} subalgebra $\dfF$ of
  $\mathnormal{\dF}$ is the initial $\mathrm{CSP}(|,\Omega)$ algebra.
\end{theorem}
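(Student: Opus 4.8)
The plan is to reuse, for the fourth time, the schema of Theorems~\ref{Initial1}, \ref{Initial2} and~\ref{Initial1_Omega}: produce the unique homomorphism out of the initial algebra into $\dfF$ and show it is simultaneously surjective and injective, hence an isomorphism. Write $\mathrm{I}$ for the initial $\mathrm{CSP}(|,\Omega)$-algebra. As $\dfF$ is a $\mathrm{CSP}(|,\Omega)$-algebra, there is a unique homomorphism $h\type\mathrm{I}\rightarrow\dfF$, which sends the class of a closed term $t$ to its denotation $\dsem{t}$ in $\dfF$.

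For surjectivity I would argue that every {\finitary} process is definable by a closed $\mathrm{CSP}(|,\Omega)$ term, following the proof of Lemma~\ref{Definability_Omega}. The key observation is that the normal form produced there,
\[P = \left(\bigsqcap_{i}\bigbox_{a \in V_{i}} a\myaction P_{a}\right)~\sqcap~\left(\Omega \boxempty \bigbox_{a \in T_P} a\myaction P_{a}\right),\]
is built solely from $\sqcap$ together with the two families of deterministic external choice operators: the inner choices range over the sets $V_i$ and $T_P$ of \emph{distinct} actions, so $\bigbox_{a\in V_i}a\myaction P_a$ is an instance of $\bigbox_{\vec a}$ and $\Omega\boxempty\bigbox_{a\in T_P}a\myaction P_a$ is an instance of $\bigbox^\Omega_{\vec a}$. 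Hence the very same induction on the length of the longest trace of $P$ goes through in the new signature, and $h$ is onto.

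Injectivity I would then derive from the completeness theorem just proved. That theorem is stated for $\dF$, but every closed $\mathrm{CSP}(|,\Omega)$ term denotes a {\finitary} process---the operations on $\dT$ and $\dR$ take finite trace sets to finite trace sets---so $\dfF$ is itself complete for equations between closed terms. Thus if $h$ identifies two elements of $\mathrm{I}$, say $\dsem{t}=\dsem{u}$ in $\dfF$, then $t=u$ is provable in $\mathrm{CSP}(|,\Omega)$ and the two elements of $\mathrm{I}$ were already equal. So $h$ is injective, and therefore an isomorphism.

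The only point needing real care, and the step I would treat as the main obstacle, is checking that the definability argument genuinely stays inside the $\mathrm{CSP}(|,\Omega)$ signature: that the $\Omega\boxempty\bigbox$-summand of the normal form is exactly the operator $\bigbox^\Omega_{\vec a}$, and that the distinctness side condition on deterministic external choice is respected (it is, since the inner choices are indexed by sets of actions rather than by sequences that might repeat). Once that is verified, the rest is a routine transcription of the already-established schema.
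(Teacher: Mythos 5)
Your proposal is correct and is essentially the paper's own argument: the paper states this theorem without a displayed proof, saying only that it is obtained "exactly as in Section~\ref{CSPconstructors2}," i.e., by the same schema you use --- surjectivity of the unique homomorphism from the initial algebra via the definability argument of Lemma~\ref{Definability_Omega} (whose normal form, as you check, lives in the $\mathrm{CSP}(|,\Omega)$ signature), and injectivity via the preceding ground completeness theorem, transferred from $\dF$ to $\dfF$ because closed terms denote finitary processes. Your attention to the distinctness side condition on the deterministic choices is exactly the right point to verify.
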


Turning to the deconstructors, relabelling and concealment can again be treated homomorphically.  For relabelling by $f$ one simply adds the equation:
\[h_{\mathnormal{Rl}}(\Omega \boxempty\bigbox_{i} a_{i}\boxaction F_{i}) = \Omega \boxempty \bigbox_{i} f(a_{i})\boxaction h_{\mathnormal{Rl}}(F_{i})\]
to the treatment in Section~\ref{CSPconstructors2}, and checks that the implied algebra satisfies the equations.
Pleasingly, the treatment of concealment can be simplified in such a way that
the deconstructor $\boxempty$ is no longer needed.  For every $a \in A$ one defines $h_{a} \type  T_{\mathrm{CSP}(\mid,\Omega)}(\emptyset) \rightarrow T_{\mathrm{CSP}(\mid,\Omega)}(\emptyset)$ homomorphically by:
\[h_{a}(P \sqcap Q) = h_{a}(P) \sqcap h_{a}(Q)\]
\[h_{a}\left(\bigbox_{i = 1}^{n} a_{i}\boxaction P_{i}\right) =  \left\{
\begin{array}{ll}
h_{a}(P_{j})  \sqcap (\Omega \boxempty \bigbox_{i \neq j} a_{i}\boxaction h_{a}(P_{i})) & (\mbox{if  } a = a_{j},\, \mbox{where} \,1 \leq j \leq n)\\
\bigbox_{i = 1}^{n} a_{i}\boxaction h_{a}(P_{i}) & (\mbox{if  } a \neq \mbox{any} \; a_{i})
\end{array}
\right.
\]
\[h_{a}\left(\Omega \boxempty \bigbox_{i = 1}^{n} a_{i}\boxaction P_{i}\right) =  \left\{
\begin{array}{ll}
h_{a}(P_{j})  \sqcap (\Omega \boxempty \bigbox_{i \neq j} a_{i}\boxaction h_{a}(P_{i})) & (\mbox{if  } a = a_{j},\, \mbox{where} \,1 \leq j \leq n)\\
\Omega \boxempty \bigbox_{i = 1}^{n} a_{i}\boxaction h_{a}(P_{i}) & (\mbox{if  } a \neq \mbox{any} \; a_{i})
\end{array}
\right.
\]
Note the use of the new form of deterministic choice here. 
One has again to verify that the implicit algebra obeys satisfies the required equations. 
The treatment of the binary deconstructors
$\boxempty$, $\mypar$ and $\myinterleave$ is also a trivial adaptation of
the treatment in Section~\ref{CSPconstructors2}. For $\boxempty$ one adds a further 
auxiliary operator $\boxempty^{\Omega,a_{1}\ldots a_{n}}$ and the equations:
\begin{eqnarray}
(\Omega \boxempty \bigbox_{i} a_{i}\boxaction P_{i}) \boxempty Q  & = &  (P_{1},\ldots, P_{n})\boxempty^{\Omega,a_{1}\ldots a_{n}} Q\nonumber\\
 (P_{1},\ldots, P_{n})\boxempty^{\Omega,a_{1}\ldots a_{n}} (Q \sqcap Q')  & = &
 \begin{array}[t]{@{}l@{}}
 ((P_{1},\ldots, P_{n})\boxempty^{\Omega,a_{1}\ldots a_{n}} Q) \sqcap \mbox{}\\
 ((P_{1},\ldots, P_{n})\boxempty^{\Omega,a_{1}\ldots a_{n}}Q')
 \end{array}
  \nonumber\\
(P_{1},\ldots, P_{n})\boxempty^{\Omega,a_{1}\ldots a_{n}} (\bigbox_{j} b_{j}\boxaction Q_{j}) & = & 
  (\Omega \boxempty \bigbox_{i} a_{i}\boxaction P_{i}) \boxempty \bigbox_{j} b_{j}\boxaction Q_{j}
\nonumber\\
(P_{1},\ldots, P_{n})\boxempty^{\Omega,a_{1}\ldots a_{n}} (\Omega\boxempty\bigbox_{j} b_{j}\boxaction Q_{j}) & = & 
  (\Omega \boxempty \bigbox_{i} a_{i}\boxaction P_{i}) \boxempty \bigbox_{j} b_{j}\boxaction Q_{j}
\nonumber\\
(P_{1},\ldots, P_{n})\boxempty^{a_{1}\ldots a_{n}} (\Omega\boxempty\bigbox_{j} b_{j}\boxaction Q_{j}) & = & 
  (\Omega \boxempty \bigbox_{i} a_{i}\boxaction P_{i}) \boxempty \bigbox_{j} b_{j}\boxaction Q_{j}
\nonumber
\end{eqnarray}
For  $\mypar$ one adds the  auxiliary operator
$\mypar^{\Omega,a_{1}\ldots a_{n}}$ and the  equations:
\begin{eqnarray}
 (\Omega\boxempty\bigbox_{i} a_{i}\boxaction P_{i}) \mypar Q & = & (P_{1},\ldots, P_{n})\mypar^{\Omega,a_{1}\ldots a_{n}} Q\nonumber\\
 (P_{1},\ldots, P_{n})\mypar ^{\Omega,a_{1}\ldots a_{n}} (Q \sqcap Q')
 & = &
 \begin{array}[t]{@{}l@{}}
  ((P_{1},\ldots, P_{n})\mypar^{\Omega,a_{1}\ldots a_{n}} Q) \sqcap \mbox{}\\
  ((P_{1},\ldots, P_{n})\boxempty^{\Omega,a_{1}\ldots a_{n}} Q')\end{array}\nonumber\\
 (P_{1},\ldots, P_{n})\mypar^{\Omega,a_{1}\ldots a_{n}} (\bigbox_{j} b_{j}\boxaction Q_{j}) & = & 
 \Omega\boxempty  \bigbox_{a_{i}  =  b_{j}} a_{i}\boxaction(P_{i}\mypar Q_{j})
\nonumber\\
 (P_{1},\ldots, P_{n})\mypar^{\Omega,a_{1}\ldots a_{n}} (\Omega\boxempty\bigbox_{j} b_{j}\boxaction Q_{j}) & = & 
 \Omega\boxempty  \bigbox_{a_{i}  =  b_{j}} a_{i}\boxaction(P_{i}\mypar Q_{j})
\nonumber\\
 (P_{1},\ldots, P_{n})\mypar^{a_{1}\ldots a_{n}} (\Omega\boxempty\bigbox_{j} b_{j}\boxaction Q_{j}) & = & 
 \Omega\boxempty  \bigbox_{a_{i}  =  b_{j}} a_{i}\boxaction(P_{i}\mypar Q_{j})
\nonumber
\end{eqnarray}
Finally, for  $\myinterleave$ one simply adds  extra equations:
\vspace{-1ex}

\noindent
\begin{eqnarray}
(\Omega\boxempty\bigbox_{i = 1}^{n} a_{i}\boxaction P_{i}) \myinterleave^{l}  Q & = &  
\Omega\boxempty\bigbox_{i} a_{i}\boxaction ((P_{i} \myinterleave^{l}  Q) \boxempty (P_{i} \myinterleave^{r}  Q))\nonumber\\
Q \myinterleave^{r} (\Omega\boxempty\bigbox_{i = 1}^{n} a_{i}\boxaction P_{i}) & = & 
\Omega\boxempty\bigbox_{i} a_{i}\boxaction ((Q \myinterleave^{l}  P_{i}) \boxempty (Q \myinterleave^{r}  P_{i})) 
\nonumber
\end{eqnarray}

\section{Combining CSP and functional programming} \label{Functional}

To combine CSP with functional programming, specifically the
computational $\lambda$-calculus, we use the monad
$T_{\mathrm{CSP}(|,\Omega)}$ for the denotational semantics.  As
remarked above, CSP processes then become terms of type
$\prog{empty}$. However, as the constructors are polymorphic, it is
natural to go further and look for polymorphic versions of the
deconstructors. We therefore add polymorphic constructs
to $\lambda_{c}$ as follows:\vspace{1em}

{\bf  Constructors}
 
 \[\frac{M \type \sigma \quad N \type \sigma}{M \sqcap N \type \sigma}
 \quad\quad
 \frac{M\type\sigma}{a \rightarrow M\type\sigma}
 \quad\quad \Omega\type\sigma
\]

{\bf Unary Deconstructors}

\[\frac{M \type \sigma}{f(M) \type \sigma}
\quad\quad\quad
\frac{M \type \sigma}{M \backslash  a \type \sigma}\]
for any   relabelling function $f$, and any $a \in A$. (One should really restrict the allowable relabelling functions in order to keep the syntax finitary.)\vspace{1em}

{\bf Binary Deconstructors}

\[\frac{M \type \sigma \quad N \type \sigma}{M \boxempty N \type \sigma} 
\quad\quad\quad
\frac{M \type \sigma \quad N \type \tau}{M \mypar N \type \sigma \times \tau} 
\quad\quad\quad
\frac{M \type \sigma \quad N \type \tau}{M \myinterleave N \type \sigma \times \tau} \]
The idea of the two parallel constructs is to evaluate the two terms in parallel and then return the pair of the two values produced.
We did not include syntax for the two deterministic choice constructors as they are definable from $a \rightarrow - $ and $\Omega$  with the aid of the $\boxempty$ deconstructor.

For the denotational semantics, the semantics of types is given as usual using the monad $T_{\mathrm{CSP}(|,\Omega)}$, which we know exists by the general considerations of Section~\ref{Prelim}. These general considerations also yield a semantics for the constructors. For example, for every set $X$ we have the map:
\[\sqcap_{X} \type T_{\mathrm{CSP}(|,\Omega)}(X)^{2} \rightarrow T_{\mathrm{CSP}(|,\Omega)}(X)\]
which we can use for $X = \dsem{\sigma}$ to interpret terms $M \sqcap N\type \sigma$.

The homomorphic point of view also leads to an interpretation of the unary deconstructors, but using free algebras rather than just the initial one. For example, for relabelling by $f$
we need a function:
\[h_{\mathnormal{Rl}}\type T_{\mathrm{CSP}(|,\Omega)}(X) \rightarrow T_{\mathrm{CSP}(|,\Omega)}(X)\]
 We obtain this as the unique homomorphism extending the unit $\eta_{X}\! \type X\! \mathbin\rightarrow T_{\mathrm{CSP}(|,\Omega)}(X)$, equipping $T_{\mathrm{CSP}(|,\Omega)}(X)$ with the algebra structure $\mathcal{A} = (T_{\mathrm{CSP}(|,\Omega)}(X), \sqcap_{\mathcal{A}}\!, \bigbox_{{\mathcal{A}}}\!, \bigbox^{\Omega}_{{\mathcal{A}}}\!)$ where
 \[x \sqcap_{\mathcal{A}} y = x \sqcap_{X} y\]
for $x,y \in  T_{\mathrm{CSP}(|,\Omega)}(X)$, 
\begin{center} $(\bigbox_{\vec{a}})_{\mathcal{A}}(x_{1},\ldots,x_{n})  = (\bigbox_{f(\vec{a})})_{X}(x_{1},\ldots,x_{n}) $
\end{center}
and
\begin{center} $(\bigbox^{\Omega}_{\vec{a}})_{\mathcal{A}}(x_{1},\ldots,x_{n})  = (\bigbox^{\Omega}_{f(\vec{a})})_{X}(x_{1},\ldots,x_{n}) $
\end{center}

Concealment $- \backslash a$ can be treated analogously, but now following the treatment in the case of $\dfF$, and defining $\cal{A}$ by:
 \[x \sqcap_{\mathcal{A}} y = x \sqcap_{X} y\vspace{-1ex}\]
for $x,y \in  T_{\mathrm{CSP}(|,\Omega)}(X)$, 
\begin{center} $(\bigbox_{\vec{a}})_{\mathcal{A}}(x_{1},\ldots,x_{n})  = \left\{
\begin{array}{ll}
x_{j}  \sqcap (\Omega \boxempty \bigbox_{i \neq j} a_{i}\boxaction x_{i}) & (\mbox{if  } a = a_{j},\, \mbox{where} \,1 \leq j \leq n)\\
\bigbox_{i = 1}^{n} a_{i}\boxaction x_{i} & (\mbox{if  } a \neq \mbox{any} \; a_{i})
\end{array}
\right.$
\end{center}
and
\begin{center} $(\bigbox^{\Omega}_{\vec{a}})_{\mathcal{A}}(x_{1},\ldots,x_{n})  =   \left\{
\begin{array}{ll}
x_{j}  \sqcap (\Omega \boxempty \bigbox_{i \neq j} a_{i}\boxaction x_{i}) & (\mbox{if  } a = a_{j},\, \mbox{where} \,1 \leq j \leq n)\\
\Omega \boxempty \bigbox_{i = 1}^{n} a_{i}\boxaction x_{i} & (\mbox{if  } a \neq \mbox{any} \; a_{i})
\end{array}
\right.$
\end{center}
We here again make use of the deterministic choice operator made available by the presence of $\Omega$.

However, we cannot, of course, carry this on to binary deconstructors as we have no general algebraic treatment of them. We proceed instead by giving a concrete definition of them (and the other constructors and deconstructors). That is, we  give an explicit description of the free $\mathrm{CSP}(|,\Omega)$-algebra on a  set $X$ and define our operators in terms of that representation.

An  \emph{$X$-trace} is a pair $(w, x)$, where $w \in A^{*}$ and $x
\in X$; it is generally more suggestive to write $(w, x)$ as $wx$.
For any relabelling function $f$, we set $f(wx) = f(w)x$, 
and, for any $a \in A$, we set $wx\backslash a = (w\backslash a)x$.
An \emph{$X$-process} is a pair $(T,F)$ with $T$ a set of traces as well as
$X$-traces, and $F$ a set of failure pairs, satisfying the same five
conditions as in Section~\ref{divergence}, together with:
\begin{itemize}\parskip 0pt
\item[$2'$] $wx \in T \Rightarrow  w \in T$ (for $x \in X$)
\end{itemize}

The CSP operators are defined on $X$-processes exactly as before,
except that the two parallel operators now have more general types:
$$\mypar_{X,Y}, \myinterleave_{X,Y}\,\type
T_{\mathrm{CSP}(|,\Omega)}(X) \times
T_{\mathrm{CSP}(|,\Omega)}(Y) \rightarrow
T_{\mathrm{CSP}(|,\Omega)}(X\times Y)$$
We take $\futP(w) := \{a \in A \mid wa\in T_P\}$,  as before.

\[\begin{array}{@{}l@{~=~}l@{}}
        \Omega_\TX & \{\epsilon\} \\ \Omega_\RX & \emptyset
\\
	\mystop_{\mathnormal{\TX}}&\{\varepsilon\}
\\
	\mystop_{\mathnormal{\RX}}&\{(\varepsilon,W) \mid W \subseteq A\}
\\
	a\myactionsub_{\mathnormal{\TX}} P & \{\varepsilon\} \cup \{aw \mid w \in T_P\}
\\
	a\myactionsub_{\mathnormal{\RX}} P &\{(\varepsilon, W) \mid a \notin W\} \cup
	                                     \{(aw,W)\mid (w,W) \in F_P\}
\\
	 P \sqcap_{\mathnormal{\TX}} Q & T_P \cup T_Q
\\
	 P \sqcap_{\mathnormal{\RX}} Q & F_P \cup F_Q
\\
     P \boxempty_{\mathnormal{\TX}} Q & T_P \cup T_Q
\\
     P \boxempty_{\mathnormal{\RX}} Q &
           \{(\varepsilon,W)\mid (\varepsilon,W) \in F_P \cap F_Q\}
           \cup \mbox{}
           \{(w,W)\mid  w \neq \varepsilon,~ (w,W) \in F_P \cup F_Q\}
\\
f_\TX(P)  &  \{f(w) \mid w \in T_P\}
\\
f_\RX(P)  &  \{(f(w), W) \mid (w,f^{-1}(W) \cap \futP(w)) \in F_P\}
\\
P\backslash_\TX a  &  \{w \backslash a \mid w \in T_P\}
\\
P\backslash_\RX a  &  \{(w \backslash a,W) \mid (w,W \cup \{a\}) \in F_P\}
\\
P \mypar_\TXY Q & \{ w \mid w \in T_P \cap T_Q \cap A^*\} \cup
\{ w(x,y) \mid wx \mathbin\in T_P,\, wy \mathbin\in T_Q\}
\\
P \mypar_\RXY Q & \{ (w, W \cup V) \mid (w,W) \in F_P,~ (w,V) \in F_Q\}
\\
P \myinterleave_\TXY Q &
   \{ w \mid u \in T_P \cap A^*,~ v \mathbin\in T_Q \cap A^*,~ w \mathbin\in u\!\mid \!v\} \cup
\mbox{}\\
\multicolumn{2}{@{}r}{
   \{ w(x,y) \mid ux \in T_P,~ vy \mathbin\in T_Q,~ w \mathbin\in u\!\mid \!v\} }
\\
P \myinterleave_\RXY Q & \{ (w, W) \mid (u,W) \mathbin\in F_P,\, (v,W) \mathbin\in F_Q,\,
 w \mathbin\in u\!\mid \!v\}
 \end{array}\]
Here, much as before, we write $P \mathrel{\mathrm{op}_\FX} Q = (P \mathrel{\mathrm{op}_\TX} Q, P
 \mathrel{\mathrm{op}_\RX} Q)$ when defining the CSP operators on $X$-processes.
The $X$-processes also form the carrier of a
$\mathrm{CSP}(|,\Omega)$-algebra $\dFX$,
with the operators defined as follows:
\[\begin{array}{@{}l@{~=~}l@{}}
	 P \sqcap_{\mathnormal{\dTX}} Q & T_P \cup T_Q
\\
	 P \sqcap_{\mathnormal{\dRX}} Q & F_P \cup F_Q
\\
(\bigbox^\Omega_{\vec{a}})_{\mathnormal{\dTX}}(P_{1},\ldots,P_{n}) &
\{\varepsilon\} \cup \{a_{i}w \mid w \in T_{P_i}\}
\\
(\bigbox^\Omega_{\vec{a}})_{\mathnormal{\dRX}}(P_{1},\ldots,P_{n}) &
\{(a_{i}w,W) \mid (w,W) \in F_{P_i}\}
\\
(\bigbox_{\vec{a}})_{\mathnormal{\dTX}}(P_{1},\ldots,P_{n}) &
\{\varepsilon\} \cup \{a_{i}w \mid w \in T_{P_i}\}
\\
(\bigbox_{\vec{a}})_{\mathnormal{\dRX}}(P_{1},\ldots,P_{n}) &
  \{(\varepsilon, W) \mid W \cap \{a_{1},\dots,a_{n}\} = \emptyset\}
\cup \mbox{}
\\\multicolumn{2}{@{}r@{}}{
  \{(a_{i}w,W) \mid (w,W) \in F_{P_i}\}
}
\end{array}\]
The {\finitary} $X$-processes are those with a finite set of traces
and $X$-traces; they form the carrier of a
$\mathrm{CSP}(|,\Omega)$-algebra $ \mathnormal{\dfFX}$.

We now show that $\mathnormal{\dfFX}$ is the free $\mathrm{CSP}(|,\Omega)$-algebra over $X$. 
As is well known, the free algebra of a theory $\Th$ over a set $X$  is the same as the initial algebra of the theory $\Th^{+}$ obtained by  extending $\Th$ with constants $\underline{x}$ for each $x \in X$ but without changing the axioms. The unit map $\eta\type X \rightarrow T_{\Th}(X)$ sends $x \in X$ to the denotation of $\underline{x}$ in the initial algebra. 
We therefore show that $ \mathnormal{\dfFX}$,  extended to a  $\mathrm{CSP}(|,\Omega)^{+}$-algebra by taking
\[\dsem{\underline{x}} = (\{x\}, \emptyset)
\qquad\
(\mbox{for $x \in X$})\]
 is the initial $\mathrm{CSP}(|,\Omega)^{+}$-algebra.
We begin by looking at definability. 

\begin{lemma} \label{plus-definability}
The {\finitary} $X$-processes are those definable by closed
$\mathrm{CSP}(|,\Omega)^{+}$ terms.
\end{lemma}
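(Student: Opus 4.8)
The plan is to prove both inclusions of the asserted equality. For the easy direction, that every closed $\mathrm{CSP}(|,\Omega)^{+}$ term denotes a {\finitary} $X$-process, I would argue by structural induction on terms: the constants $\Omega$, $\mystop$ and $\underline{x}$ all denote $X$-processes with finite sets of traces and $X$-traces, and the operations $\sqcap$, $\bigbox_{\vec{a}}$ and $\bigbox^{\Omega}_{\vec{a}}$ preserve finitariness, since $\sqcap_{\dfFX}$ takes finite unions of trace sets while the deterministic external choices prepend a single action to finitely many traces and adjoin $\varepsilon$. This is really just the observation, already implicit in the construction of $\dfFX$, that the {\finitary} $X$-processes form a subalgebra containing the interpretations of all the constants.

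The substance is the converse: every {\finitary} $X$-process is definable. Following the pattern of Lemma~\ref{Definability_Omega}, I would induct on the length of the longest trace in $T_P$, now counting both ordinary traces and $X$-traces. Fix a {\finitary} $X$-process $P=(T_P,F_P)$. Since $\futP(w)$ only ever mentions actions of $A$ and the $X$-traces are irrelevant to closure conditions~4 and~5, the evident analogue of Lemma~\ref{Refusal_accounting_Omega} applies unchanged, yielding sets $V_{1},\ldots,V_{m}\subseteq\futP(\varepsilon)$ (with $m\geq 0$) such that $(\varepsilon,W)\in F_P$ iff $W\cap V_{i}=\emptyset$ for some $i$. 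I would then set $X_{0}=\{x\in X\mid x\in T_P\}$ for the immediately-available return values and, for each $a\in\futP(\varepsilon)$, define the residual $P_{a}=(\{u\mid au\in T_P\},\,\{(w,W)\mid(aw,W)\in F_P\})$, where $u$ ranges over both traces and $X$-traces; using condition~$2'$ one sees that every first action of an $X$-trace already lies in $\futP(\varepsilon)$. Each $P_{a}$ is again a {\finitary} $X$-process whose longest trace is strictly shorter, hence definable by induction.

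The reassembly I propose is
\[P = \left(\bigsqcap_{i=1}^{m}\bigbox_{a \in V_{i}} a\myaction P_{a}\right) \sqcap \left(\Omega \boxempty \bigbox_{a \in \futP(\varepsilon)} a\myaction P_{a}\right) \sqcap \left(\bigsqcap_{x \in X_{0}} \underline{x}\right),\]
with the empty internal choice read as $\Omega$, the unit for $\sqcap$, so that the cases $m=0$ and $X_{0}=\emptyset$ are covered automatically. Each of the three groups plays a distinct role: the $V_{i}$-summands reproduce exactly the $\varepsilon$-failures of $P$ via the accounting lemma; the $\Omega\boxempty(\ldots)$ summand contributes all the deeper traces $aw$ and $X$-traces $awx$ for $a\in\futP(\varepsilon)$ while, because of the $\Omega$, adding no failures at $\varepsilon$; and each $\underline{x}$ adjoins the immediate $X$-trace $x$ without disturbing the failure set. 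What then remains is the verification that, after taking unions of trace sets and of failure sets, the right-hand side has trace component $T_P$ and failure component $F_P$.

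The main obstacle I anticipate lies entirely in this last bookkeeping: one must check that the three kinds of summand do not interfere — in particular that prepending actions and adjoining the value-terminations $\underline{x}$ leaves the failure component untouched except where intended (failure pairs range only over $A^{*}$, so $X$-traces never enter them), and that conditions~$2'$ and~5 are respected — and that the degenerate cases collapse correctly under the semilattice-with-zero structure of $\sqcap,\Omega$. With definability established, the stated equivalence follows, exactly as in the proofs of Lemmas~\ref{Definability} and~\ref{Definability_Omega}.
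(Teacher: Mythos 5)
Your proposal is correct and follows essentially the same route as the paper: induction on the length of the longest trace, the $X$-process analogue of Lemma~\ref{Refusal_accounting_Omega}, and reassembly via the term $\bigl(\bigsqcap_{i}\bigbox_{a \in V_{i}} a\myaction P_{a}\bigr) \sqcap \bigl(\Omega \boxempty \bigbox_{a} a\myaction P_{a}\bigr) \sqcap \bigsqcap_{x} \underline{x}$, which is exactly the paper's displayed decomposition (the paper writes $a\in T_P$ and $x\in T_P$ for your $\futP(\varepsilon)$ and $X_0$). The extra material you supply --- the explicit easy direction and the bookkeeping remarks --- is sound but is left implicit in the paper.
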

\begin{proof}
The proof goes just as the one for Lemma~\ref{Definability_Omega},
using that Lemma~\ref{Refusal_accounting_Omega} applies just as well
to finitary $X$-processes, but this time we have
\\[1ex]\mbox{}\hfill$\displaystyle
  P = \bigsqcap_{i}\bigbox_{a \in V_{i}} a\myaction P_{a}
  ~\sqcap~ \left(\Omega \boxempty \bigbox_{a \in T_P} a\myaction
  P_{a}\right)
  ~\sqcap~ \bigsqcap_{x \in T_P} \underline{x}
$\hfill\mbox{}\hspace{-2em}\vspace{1ex}
\qed 
 \end{proof}
   
Next, we say that a closed $\mathrm{CSP}(|,\Omega)^{+}$-term $t$ is
in \emph{normal form} if it is has one of the following two forms:
\vspace{-1ex}
\[ \bigsqcap_{L \in \mathcal{L}}\bigbox_{a \in L} a\boxaction t_{a}
 \sqcap \bigsqcap_{x \in J} \underline{x}
\qquad \mbox{or} \qquad \left(\Omega \boxempty \bigbox_{a \in K} a\boxaction t_{a}\right)
 \sqcap \bigsqcap_{x \in J} \underline{x}\]
where, as appropriate,  $\mathcal{L}$ is a finite non-empty saturated collection of finite sets of
actions, $J\subseteq_{\mathrm{fin}} X$, $K \subseteq_{\mathrm{fin}} A$, and each term $t_{a}$ is in normal form. 

\begin{lemma} \label{equal normal forms}
Two normal forms are identical  if they have the same denotation in $\dfFX$.
\end{lemma}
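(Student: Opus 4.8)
The plan is to prove the statement by induction on the length of the longest trace or $X$-trace in the common denotation, showing that the $X$-process $\dsem{t} = (T,F)$ of a normal form $t$ determines every piece of syntactic data from which $t$ is assembled. Writing $P = (T,F)$ for the shared denotation of two normal forms, I would recover, in turn, the constant part $J$, which of the two admissible shapes $t$ has, the set of initial actions, the family $\mathcal{L}$ (resp.\ the set $K$), and finally the subterms $t_{a}$; since normal forms are written with fixed standard orderings, coincidence of all this data forces literal syntactic equality.

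First I would read off $J$. As $\sqcap$ acts by union on both components and $\dsem{\underline{x}} = (\{x\},\emptyset)$, while every $a\boxaction t_{a}$ summand contributes only traces and $X$-traces beginning with the letter $a$, the empty-word $X$-traces of $P$ arise solely from the $\underline{x}$; hence $J = \{x \in X \mid (\varepsilon,x) \in T\}$. Next I would separate the two shapes. In the first shape each external choice $\bigbox_{a \in L}$ carries no $\Omega$, so it contributes the failure pairs $\{(\varepsilon,W) \mid W \cap L = \emptyset\}$; taking $W = \emptyset$ gives $(\varepsilon,\emptyset) \in F$. In the second shape every external choice has the form $\bigbox^{\Omega}_{\vec{a}}$, whose refusal component contributes only pairs $(a_{i}w,W)$ with a nonempty leading action, and $\Omega$ and the $\underline{x}$ contribute no failures at all; so $F$ contains no pair of the form $(\varepsilon,W)$. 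Thus the single test ``$(\varepsilon,\emptyset) \in F$?'' decides the shape.

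With the shape fixed, the set of initial actions $A_{0} := \futP(\varepsilon)$ is determined; it equals $K$ in the second shape (recovering $K$ at once) and $\bigcup \mathcal{L}$ in the first. For the first shape I would recover $\mathcal{L}$ using saturation: for $L \subseteq A_{0}$ one has $(\varepsilon, A_{0} \setminus L) \in F$ iff $L' \subseteq L$ for some $L' \in \mathcal{L}$, and since $\mathcal{L}$ is saturated with $\bigcup \mathcal{L} = A_{0}$ this holds iff $L \in \mathcal{L}$; hence $\mathcal{L} = \{L \subseteq A_{0} \mid (\varepsilon, A_{0} \setminus L) \in F\}$ is determined by $F$. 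Finally, for each $a \in A_{0}$ the derivative $P_{a} := (\{u \mid au \in T\}, \{(w,W) \mid (aw,W) \in F\})$ is again a finitary $X$-process, and as the summands $a\boxaction t_{a}$ are the only contributors of behaviour beginning with $a$ (the $\underline{x}$ and $\Omega$ contribute none), $P_{a} = \dsem{t_{a}}$, with the same $t_{a}$ serving every $L \in \mathcal{L}$ that contains $a$. Since $P_{a}$ has a strictly shorter longest trace and $t_{a}$ is itself in normal form, the induction hypothesis applies; two normal forms with denotation $P$ therefore share $J$, the shape, $\mathcal{L}$ (resp.\ $K$), and all subterms $t_{a}$, and so are identical.

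I expect the main obstacle to be the recovery of $\mathcal{L}$: one must use saturation in both directions and, crucially, invoke the closure conditions of Lemma~\ref{Refusal_accounting_Omega} to guarantee that the $\varepsilon$-failures of $F$ record nothing beyond $\bigcup_{L \in \mathcal{L}} \{W \mid W \cap L = \emptyset\}$, so that no spurious refusals corrupt the reconstruction. The second delicate point is verifying that $\Omega$ genuinely suppresses all $\varepsilon$-failures, so that the test $(\varepsilon,\emptyset) \in F$ cleanly distinguishes the shapes; everything else is routine structural bookkeeping.
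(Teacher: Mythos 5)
Your proposal is correct and follows essentially the same route as the paper's proof: distinguish the two shapes by whether $(\varepsilon,\emptyset) \in F$, read off $J$ from the empty-word $X$-traces, recover $\mathcal{L}$ (resp.\ $K$) from the $\varepsilon$-failures using saturation, identify the derivatives $\{w \mid aw \in T\}$ and $\{(w,W) \mid (aw,W) \in F\}$ with the denotations of the subterms $t_a$, and conclude by induction. The only cosmetic differences are that you induct on the longest trace of the common denotation where the paper inducts on the sum of the sizes of the two normal forms, and that you package the saturation argument as a direct characterisation $\mathcal{L} = \{L \subseteq A_0 \mid (\varepsilon, A_0 \setminus L) \in F\}$ rather than as two inclusions; both choices are sound.
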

\begin{proof}
Consider two normal forms with the same denotation in $\dfFX$, say $(T,F)$.
As $(\varepsilon,\emptyset) \in F$ iff $F$ is the denotation of a
normal form of the first form (rather than the second), both normal
forms must be of the same form.
Thus, there are two cases to consider, the first of which concerns two
forms:
\[ \bigsqcap_{L \in \mathcal{L}}\bigbox_{a \in L} a\boxaction t_{a}
\sqcap \bigsqcap_{x \in J} \underline{x}
\quad\quad\quad
 \bigsqcap_{L' \in \mathcal{L'}}\bigbox_{a' \in L'} a'\boxaction
 t'_{a'} \sqcap \bigsqcap_{x \in J'} \underline{x}
\]
We argue by induction on the sum of the sizes of the two normal forms.
We evidently have that $J=J'$.
Next, if $a \in \bigcup \mathcal{L}$ then $a \in T$ and so $a \in
\bigcup \mathcal{L'}$; we therefore have that    $\bigcup \mathcal{L}
\subseteq \bigcup \mathcal{L'}$. Now, if $L \in \mathcal{L}$  then 
$(\varepsilon, (\bigcup \mathcal{L'})\backslash L) \in F$; so for some $L' \in \mathcal{L}$ we have $L' \cap ((\bigcup\mathcal{L'}) \backslash L) = \emptyset$, and so $L' \subseteq L$. As $\mathcal{L'}$ is saturated, it follows by the previous remark that $L \in \mathcal{L}'$. So we have the inclusion $\mathcal{L}\subseteq \mathcal{L'}$ and then, arguing symmetrically, equality.

Finally, the denotations of $t_{a}$ and $t'_{a}$, for $a \in \bigcup
\mathcal{L} = \bigcup \mathcal{L'}$ are the same, as they are
determined by $T$ and $F$, being $\{w \mid aw \in T\}$ and
$\{(w,W)\mid (aw,W) \in F\}$, and the argument concludes, using the inductive hypothesis.

The other case concerns normal forms:
\[ \left(\Omega \boxempty \bigbox_{a \in K} a\boxaction t_{a}\right)
 \sqcap \bigsqcap_{x \in J} \underline{x}
\quad\quad\quad
 \left(\Omega \boxempty \bigbox_{a' \in K'} a'\boxaction t'_{a}\right)
 \sqcap \bigsqcap_{x \in J'} \underline{x}\]
Much as  before we find $J=J'$, $K=K'$, and $t_a = t_a$ for $a \in K$.
\qed \end{proof}

\begin{lemma} \label{plus-completeness}
$\mathrm{CSP}(|,\Omega)^{+}$ is ground complete with respect to $\dfFX$.
\end{lemma}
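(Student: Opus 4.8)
The plan is to follow the normal-form strategy already used for Proposition~\ref{completeness} and for the completeness theorems of Sections~\ref{CSPconstructors2} and~\ref{divergence}, now extended to account for the constants $\underline{x}$. Ground completeness has two directions. The soundness direction---that every ground equation provable in $\mathrm{CSP}(|,\Omega)^{+}$ holds in $\dfFX$---is immediate: $\dfFX$ is a $\mathrm{CSP}(|,\Omega)$-algebra and, once equipped with the interpretation $\dsem{\underline{x}} = (\{x\},\emptyset)$, a $\mathrm{CSP}(|,\Omega)^{+}$-algebra, so it satisfies every axiom and hence every provable equation.

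For the converse I would first establish the key auxiliary fact that every closed $\mathrm{CSP}(|,\Omega)^{+}$ term is provably equal to a normal form, in the sense defined just before Lemma~\ref{equal normal forms}. This goes by structural induction on the term. The constructor $\sqcap$ and the two deterministic external choice operators are treated exactly as in the proof of the completeness theorem for $\mathrm{CSP}(|,\Omega)$: the semilattice laws for $\sqcap$ push internal choice outermost, commutativity and distributivity of the deterministic external choices over $\sqcap$ drive the term into a conjunction of (possibly $\Omega$-guarded) external choices, and the derived equations~(\ref{derived-2}) are used both to saturate the collection $\mathcal{L}$ and to reconcile the two shapes of normal form: combining a first-form term with a second-form term under $\sqcap$ eliminates the $\Omega$-summand via the first equation of~(\ref{derived-2}) and yields a first-form term, while combining two second-form terms merges them into one via the second equation of~(\ref{derived-2}).

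The only genuinely new ingredient is the treatment of the constants. A single $\underline{x}$ is, using the zero axiom~(\ref{Omega}), provably equal to the second-form normal form whose action part is empty ($K = \emptyset$, so the external choice is just $\Omega$) and with $J = \{x\}$; and when two normal forms are combined under $\sqcap$, their constant factors $\bigsqcap_{x \in J}\underline{x}$ and $\bigsqcap_{x \in J'}\underline{x}$ merge into $\bigsqcap_{x \in J \cup J'}\underline{x}$ by associativity, commutativity and idempotence of $\sqcap$, the action parts being combined as above. With normalisation in hand the lemma follows by the argument used in Proposition~\ref{completeness}: given a ground equation $t = u$ that holds in $\dfFX$, reduce $t$ and $u$ to normal forms $t'$ and $u'$; by soundness $t'$ and $u'$ have the same denotation as $t$ and $u$ respectively, hence the same denotation as each other, so by Lemma~\ref{equal normal forms} they are syntactically identical, whence $t = t' = u' = u$ is provable.

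I expect the main obstacle to be the normalisation step, and within it the bookkeeping needed to keep the term in one of exactly two canonical shapes while combining subterms: ensuring that $\mathcal{L}$ is saturated, correctly propagating the presence or absence of the $\Omega$-summand through $\sqcap$-combinations of first-form and second-form subterms (where~(\ref{derived-2}) is essential), and threading the constant factor $\bigsqcap_{x \in J}\underline{x}$ through all of these manipulations. Since each of these manipulations is a minor variation on arguments already carried out in the earlier completeness proofs, the real work is confined to checking that the added constant summands never interfere with the saturation and $\Omega$-reconciliation steps.
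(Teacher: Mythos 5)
Your proposal is correct and follows essentially the same route as the paper: the paper's proof is simply "a straightforward induction shows that every term has a normal form, and then completeness follows by Lemma~\ref{equal normal forms}." You have merely filled in the details of that induction (the treatment of $\underline{x}$ via axiom~(\ref{Omega}), the merging of the $J$-parts, and the use of~(\ref{derived-2}) to reconcile the two normal-form shapes), all of which match the intended argument.
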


\begin{proof}
As before, a straightforward induction shows that every term has a
normal form, and then completeness follows by Lemma~\ref{equal normal forms}.
\qed \end{proof}

\begin{theorem} \label{Free} The algebra $\dfFX$ is the free $\mathrm{CSP}(|,\Omega)$-algebra over $X$. 
\end{theorem}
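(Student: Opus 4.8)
The plan is to reduce the freeness claim to an initiality claim and then re-run the surjective-plus-injective argument used for Theorems~\ref{Initial1} and~\ref{Initial1_Omega}. By the standard correspondence recalled just before Lemma~\ref{plus-definability}, the free $\mathrm{CSP}(|,\Omega)$-algebra over $X$ coincides with the initial $\mathrm{CSP}(|,\Omega)^{+}$-algebra, where $\mathrm{CSP}(|,\Omega)^{+}$ adds a constant $\underline{x}$ for each $x \in X$ without new axioms, and the unit $\eta\type X \rightarrow \dfFX$ sends $x$ to $\dsem{\underline{x}} = (\{x\}, \emptyset)$. A $\mathrm{CSP}(|,\Omega)^{+}$-algebra is just a $\mathrm{CSP}(|,\Omega)$-algebra together with a choice of carrier element for each constant, i.e.\ a map out of $X$, so initiality of $\dfFX$ among such algebras is precisely the universal property of the free algebra. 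It therefore suffices to show that $\dfFX$, equipped with this interpretation of the constants, is the initial $\mathrm{CSP}(|,\Omega)^{+}$-algebra.

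First I would let $\mathrm{I}$ be the initial $\mathrm{CSP}(|,\Omega)^{+}$-algebra and take the unique homomorphism $h\type \mathrm{I} \rightarrow \dfFX$. Lemma~\ref{plus-definability} says that every finitary $X$-process is the denotation of a closed $\mathrm{CSP}(|,\Omega)^{+}$ term, so each element of $\dfFX$ is the image under $h$ of the corresponding element of $\mathrm{I}$; hence $h$ is surjective. For injectivity I would invoke Lemma~\ref{plus-completeness}: writing elements of $\mathrm{I}$ as (classes of) closed terms, if $h(s) = h(t)$ then $s$ and $t$ have the same denotation in $\dfFX$, so by ground completeness the equation $s = t$ is provable, whence $s$ and $t$ are already the same element of $\mathrm{I}$. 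Thus $h$ is a bijective homomorphism, hence an isomorphism, so $\dfFX$ is initial and therefore free.

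Since all the genuinely substantive work has been packaged into the preceding lemmas, this final argument is essentially routine. The real obstacle lies upstream, in Lemma~\ref{equal normal forms}, whose proof must handle the two shapes of normal form (with and without an $\Omega$-summand) and, crucially, track the finite set $J \subseteq X$ of constant summands alongside the saturated collection of action sets; showing that equal denotations in $\dfFX$ force syntactically identical normal forms is what makes ground completeness, and thereby the injectivity of $h$, go through. Given that lemma together with the definability lemma, I expect no further difficulty in assembling the theorem.
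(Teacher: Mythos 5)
Your proof is correct and follows exactly the paper's route: reduce freeness to initiality of the extended theory $\mathrm{CSP}(|,\Omega)^{+}$, then combine Lemma~\ref{plus-definability} (surjectivity) with Lemma~\ref{plus-completeness} (injectivity) to conclude the unique homomorphism from the initial algebra is an isomorphism. You have merely spelled out the details the paper leaves implicit.
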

\begin{proof} It follows from Lemmas~\ref{plus-definability} and
  ~\ref{plus-completeness} that $\dfFX^{+}$ is the
  initial $\mathrm{CSP}(|,\Omega)^{+}$-algebra.
\qed \end{proof}

As with any finitary equational theory, $\mathrm{CSP}(|,\Omega)$ is
equationally complete with respect to   $\dfFX$  when
$X$  is infinite. It  is not difficult to go a little further and show
that this also holds when $X$ is only required to be non-empty, and,
even, if $A$ is infinite, when it is empty.

Now that we have an explicit representation of the free
$\mathrm{CSP}(|,\Omega)$-monad in terms of $X$-processes, we
indicate how to use it to give the semantics of the computational
$\lambda$-calculus. First we need the structure of the monad. As we know from the above, the unit
$\eta_{X}\type X  \rightarrow  T_{\mathrm{CSP}(|,\Omega)}(X)$ is the map $x \mapsto (\{x\}, \emptyset)$.
Next, we need the homomorphic extension $g^{\dagger}\type
\mathnormal{\dfF}(X) \rightarrow  \mathnormal{\dfF}(Y)$ of a given map
\mbox{$g \type X \rightarrow  \mathnormal{\dfF}(Y)$}, i.e., the unique such homomorphism making the following diagram commute:
\begin{diagram}
   X &  &\\
  \dTo^{\eta_{X}} & \rdTo^{g} & & \\
  T_{\mathrm{CSP}(|,\Omega)}(X) &  \rTo^{g^{\dagger}} &  T_{\mathrm{CSP}(|,\Omega)}(Y)
\end{diagram}
This is given by:
\[(g^{\dagger}(P))_{\T}= \{v \mid v \in T_P \cap A^*\} \cup \{vw \mid vx \in T_P,~ w \in g(x)_{\T} \}\]
\vspace{-1.5em}
\[(g^{\dagger}(P))_{\R} = \{(v,V) \in F_P\} \cup \{(vw,W) \mid vx \in T_P,~ (w,W) \in g(x)_{\R} \}\]

As regards the constructors and deconstructors, we have already given explicit representations of them as functions over (finitary)
$X$-processes. We have also already given homomorphic treatments of the unary deconstructors. We finally give treatments of the binary deconstructors as unique solutions to equations, along similar lines to their treatment in the case of  $\dfF$.
Observe that:
\begin{center}
	$(\bigbox_{\vec{a}})_X(P_1,\ldots,P_n) = a_1P_1 \boxempty_X a_2P_2 \boxempty_X  \ldots \boxempty_X a_{n}P_{n}$
\\
	$(\bigbox^\Omega_{\vec{a}})_X(P_1,\ldots,P_n) = \Omega
\boxempty_X a_1P_1 \boxempty_X a_2P_2 \boxempty_X  \ldots \boxempty_X a_{n}P_{n}$
\end{center}
Using this, one finds that $\boxempty_X$, $\boxempty_X^{\Omega,a_1\ldots a_n}$ and $\boxempty_X^{a_1\ldots a_n}$, the latter defined as in equation~(\ref{box-super-def}), are the unique functions which satisfy the evident analogues of equations~(\ref{box-def}) together with, making another use of the form of external choice made available by $\Omega$:
\[\eta(x) \boxempty P = \eta(x) \sqcap_X (\Omega \boxempty P)
\]
and
\begin{center}
	$(P_1,\ldots,P_n)\boxempty^{a_1\ldots a_n} \eta(x) = (\bigbox^\Omega_{\vec{a}})_{X}(P_1,\ldots,P_n) \sqcap_X \eta(x)$
\\
	$(P_1,\ldots,P_n)\boxempty^{\Omega,a_1\ldots a_n} \eta(x) = (\bigbox^\Omega_{\vec{a}})_{X}(P_1,\ldots,P_n) \sqcap_X \eta(x)$\\
\end{center}
As regards concurrency, we define 
$$\mypar_{X,Y} \type T_{\mathrm{CSP}(|,\Omega)}(X) \times T_{\mathrm{CSP}(|,\Omega)}(Y) \rightarrow T_{\mathrm{CSP}(|,\Omega)}(X \times Y)$$ together with functions 
\[\mypar^{a_1\ldots a_n}_{X,Y}\type T_{\mathrm{CSP}(|,\Omega)}(X)^n
\times T_{\mathrm{CSP}(|,\Omega)}(Y)\rightarrow
T_{\mathrm{CSP}(|,\Omega)}(X\times Y)\]
\vspace{-1.5em}
\[\mypar^{\Omega,a_1\ldots a_n}_{X,Y}\type T_{\mathrm{CSP}(|,\Omega)}(X)^n \times T_{\mathrm{CSP}(|,\Omega)}(Y)\rightarrow T_{\mathrm{CSP}(|,\Omega)}(X\times Y)\]
\vspace{-1em}
\[\mypar^x_{X,Y}\type T_{\mathrm{CSP}(|,\Omega)}(Y) \rightarrow T_{\mathrm{CSP}(|,\Omega)}(X \times Y)\]
 where the $a_i \in A$ are all different, and $x \in X$, by the analogues of equations~(\ref{par-def}) above, together with:
\[\begin{array}{lcl}
\eta(x) \mypar Q & = & \mypar^{x} (Q)\\
&&\\
\mypar^{x}(P \sqcap Q) & = & \mypar^x(P)\, \sqcap\, \mypar^x(Q)\\
\mypar^{x}(\bigbox_{i=1}^n a_i\boxaction P_i) & = & \Omega\\
\mypar^{x}(\Omega\boxempty\bigbox_{i=1}^n a_i\boxaction P_i) & = & \Omega\\
\mypar^{x}(\eta(y)) & = & \eta((x,y))\\
&&\\
(P_1,\ldots,P_n)\mypar^{a_1\ldots a_n} \eta(x) & = & \Omega
\\
(P_1,\ldots,P_n)\mypar^{\Omega,a_1\ldots a_n} \eta(x) & = & \Omega
\end{array}\]
Much as before, the equations have a unique solution, with the
$\mypar$ component being $\mypar_{X,Y}$.

As regards interleaving, we define 
$$\myinterleave^{l}_{X,Y}, \myinterleave^{r}_{X,Y}\type T_{\mathrm{CSP}(|,\Omega)}(X) \times T_{\mathrm{CSP}(|,\Omega)}(Y) \rightarrow T_{\mathrm{CSP}(|,\Omega)}(X \times Y)$$ by:
\[\begin{array}{lcl}
P \myinterleave^l_\dfTXY Q & = & \{\varepsilon\} \cup
   \{ w \mid u \in T_P \cap A^*,~ v \mathbin\in T_Q \cap A^*,~ w \mathbin\in u\!\mid^{l} \!v\} \cup
\mbox{} \\ &&
   \{ w(x,y) \mid ux \in T_P,~ vy \mathbin\in T_Q,~ w \mathbin\in u \mid^{l}v \vee (u = v = w = \varepsilon)\}\\
  &&\\
 P\myinterleave^{l}_\dfRXY Q & = &  \{(\varepsilon, W) \mid
   (\varepsilon, W) \in F_P\} \cup \mbox{}
\\&&
   \{(w,W)\mid (u,W) \in F_P,~ (v,W) \in F_Q,~  w \in u \mid^{l}v\}\\
  &&\\
 P\myinterleave^{r}_{X,Y} Q & = & Q\myinterleave^{l}_{Y,X} P 
 \end{array}\]
One has that:
\[P\myinterleave_{X,Y} Q =  P\myinterleave^{l}_{X,Y} Q \,\boxempty\,
P \myinterleave^{r}_{X,Y} Q\]
 and that $\myinterleave^{l}_{X,Y}, \myinterleave^{r}_{X,Y}$ are components of the unique solutions to the analogues of equations~(\ref{inter-def}) above, together with:
\[\begin{array}{lcl}
\eta(x) \myinterleave^{l} Q & = & \myinterleave^{l,x}(Q)\\
&&\\
 \myinterleave^{l,x}(P \sqcap Q) & = & \myinterleave^{l,x}(P)  \, \sqcap \, \myinterleave^{l,x}(Q)\\
 \myinterleave^{l,x}(\bigbox_{i=1}^n a_i \boxaction P_i) & = & \Omega\\
 \myinterleave^{l,x}(\Omega\boxempty\bigbox_{i=1}^n a_i \boxaction P_i) & = & \Omega\\
 \myinterleave^{l,x}(\eta(y)) & = & \eta(x,y) 
\end{array}\]
and corresponding equations for $\myinterleave^{r}$ and $\myinterleave^{r,y} $.

It would be interesting to check more completely which of the usual laws, as found in, e.g.,~\cite{BHR84,Hoa85,DeN85}, the CSP operators at the level of free $\mathrm{CSP}(|,\Omega)$-algebras obey.  Note that some adjustments  need to be made due to varying types. For example, $\mypar$ is commutative, which here means that the following equation holds:
\[T_{\mathrm{CSP}(|,\Omega)}(\gamma_{X,Y}) (P \mypar_{X,Y} Q) = Q \mypar_{Y, X} P\]
where $\gamma\type  X \times Y  \rightarrow Y \times X$ is the commutativity map $(x,y) \mapsto (y,x)$.

\subsection{Termination}\label{termination}

As remarked in the introduction, termination and sequencing are available in a standard way for terms of type $\prog{unit}$. Syntactically, we regard $\prog{skip}$ as an abbreviation for $\ast$ and $M;N$ as  one for $(\lambda x\type \prog{unit}. N)(M)$ where $x$ does not occur free in $N$; semantically, we have a corresponding element of, and binary operator over, the free ${\mathrm{CSP}(|,\Omega)}$-algebra  on the one-point set.  

Let us use these ideas to treat CSP extended with termination and sequencing. We work with the finitary $\{\checkmark\}$-processes representation of $T_{\mathrm{CSP}(|,\Omega)}(\{\checkmark\})$. 
Then, following the above prescription, termination and sequencing are given by:
\[\prog{SKIP} =  \{\checkmark\} \quad\quad\quad\quad P;Q = (x \in \{\checkmark\} \mapsto Q)^{\dagger}(P)\] 
For general reasons, termination and sequencing, so-defined, form a monoid and sequencing commutes with all constructors in its first argument. For example we have that:
\[\bigbox_{i = 1}^{n} a_{i}\boxaction (P_{i};Q) = (\bigbox_{i = 1}^{n} a_{i}\boxaction P_{i});Q\]
Composition further commutes with $\sqcap$ in its second argument.
 
The deconstructors are defined as above except that in the case of the concurrency operators one has to adjust  $\mypar_{\{\checkmark\},\{\checkmark\}}$ and $\myinterleave_{\{\checkmark\},\{\checkmark\}} $ so that they remain within the world of the $\{\checkmark\}$-processes; this can be done by postcomposing them with the evident bijection between $\{\checkmark\} \times \{\checkmark\}$-processes and 
$\{\checkmark\}$-processes, and all this restricts to the finitary processes. Alternatively one can directly consider these adjusted operators as deconstructors over the (finitary)
$\{\checkmark\}$-processes.

The $\{\checkmark\}$-processes are essentially the elements of the
stable failures model of~\cite{Ros98}.
More precisely, one can define a bijection from Roscoe's model to  our $\{\checkmark\}$-processes 
by setting $\theta(T,F) = (T,F')$ where
$$F' = \{(w,W) \in A^*\times{\cal P}(A) \mid (w,W\cup\{\checkmark\})\in F\}$$
The inverse of $\theta$ sends $F'$ to the set:
$$\{(w,W),(w,W\cup\{\checkmark\}) \mid (w,W)\in F'\}
          \cup \mbox{}$$
 \vspace{-2em}
$$\{(w,W) \mid w\checkmark \in T \wedge W\subseteq A\}
\cup \{(w\checkmark,W) \mid w\checkmark \in T \wedge W \in A \cup \{\checkmark\}\}$$
and is a homomorphism between  all our operators,
whether constructors, deconstructors, termination, or sequencing (suitably defined),
and the corresponding ones defined for Roscoe's model.

\section{Discussion} \label{the-end}

We have shown the possibility of  a principled combination of CSP and functional programming from the viewpoint of the algebraic theory of effects. The main missing ingredient is an algebraic treatment of binary deconstructors, although we were able to partially circumvent that by giving explicit definitions of them. Also missing are a logic for proving properties of these deconstructors, an operational semantics, and a treatment that includes recursion. 

As regards a logic, it may prove possible to adapt the logical ideas of~\cite{PPr08,PPr09} to handle binary deconstructors; the main proof principle would then be that of {\em computation induction}, that if a proposition holds for all `values' (i.e., elements of a given set $X$) and if it holds for the applications of each constructor to any given `computations' (i.e., elements of $T(X)$) for which it is assumed to hold, then it holds for all computations.  
We do not anticipate any difficulty in giving an operational semantics for the above combination of the computational $\lambda$-calculus and CSP and proving an adequacy theorem.

To treat recursion algebraically, one passes from equational theories
to inequational theories $\Th$ (inequations have the form $t \leq u$,
for terms $t$, $u$ in a given signature $\Sigma$); inequational theories can include equations, regarding an equation  as  two evident  inequations.  There is a natural inequational logic for deducing consequences of the axioms: one simply drops symmetry from the logic for equations~\cite{Blo76}.
Then $\Sigma$-algebras
and $\Th$-algebras are taken in the category of $\omega$-cpos and
continuous functions, a free algebra monad always exists, just as
in the case of sets, and the logic is complete for the class of such algebras.
 One includes  a divergence constant $\Omega$ in the signature and the axiom
\[\Omega \leq x\] 
so that $\Th$-algebras always have a least element. Recursive
definitions are then modelled by {least} fixed-points in the usual way.
See~\cite{HPP06,Plo06} for some further explanations.

The three classical powerdomains: convex (aka Plotkin), lower (aka Hoare) and upper (aka Smyth) provide a useful illustration of these ideas~\cite{GHK03,HPP06}. One takes as  signature a binary operation symbol $\sqcap$, to retain notational consistency with the present paper (a more neutral symbol, such as $\cup$, is normally used instead), and the constant $\Omega$; one takes the theory to be that $\sqcap$ is a semilattice (meaning, as before, that associativity, commutativity and idempotence hold) and that, as given above, $\Omega$ is the least element with respect to the ordering $\leq$. This gives an algebraic account of the convex powerdomain. 

If one adds that $\Omega$ is the zero of the semilattice (which is equivalent, in the present context, to the inequation $x \leq x \sqcap y$) one obtains instead an algebraic account of the lower powerdomain. One then further has the notationally counterintuitive facts that $x \leq y$ is equivalent to $y \sqsubseteq x$, with $\sqsubseteq$ defined as in Section~\ref{CSPconstructors1}, and that $x \sqcap y$ is the supremum of $x$ and $y$ with respect to $\leq$; in models, $\leq$ typically corresponds to subset. It would be more natural in this case to use the dual order to $\sqsubseteq$ and to write $\sqcup$ instead of $\sqcap$, when we would be dealing with a join-semilattice with a least element whose order coincides with $\leq$.

If one adds instead that $x \sqcap y \leq x$, one obtains an algebraic account of the upper powerdomain. One now has that $x \leq y$ is equivalent in this context to $x \sqsubseteq y$, that $x \sqcap y$ is the greatest lower bound of $x$ and $y$, and that $x \sqcap \Omega = \Omega$ (but this latter fact is not equivalent in inequational logic to $x \sqcap y \leq x$); in models, $\leq$ typically corresponds to superset.
The notations $\sqcap$ and $\sqsubseteq$ are therefore more intuitive in the upper case, and there one has a meet-semilattice with a least element whose order coincides with $\leq$. 

It will be clear from these considerations that the stable failures model fits into the pattern of the lower powerdomain and that the failures/divergences model fits into the pattern of the upper powerdomain. In the case of the stable failures model it is natural, in the light of the above considerations, to take  $\Th$ to be ${\mathrm{CSP}(|,\Omega)}$ together with the axiom $\Omega \leq x$. The $X$-processes with countably many traces presumably form the free algebra over $X$, considered as a discrete $\omega$-cpo; one should also characterise more general cases than discrete $\omega$-cpos. 

One should also investigate whether a fragment of the failures/divergences model forms the initial model of an appropriate theory, and look at the free models of such a theory. The theory might well be found by analogy with our work on the stable failures model, substituting (\ref{chaos}) for (\ref{Omega}) and, perhaps, using the mixed-choice constructor, defined below, to overcome any difficulties with the deconstructors. One would expect the initial model to contain only finitely-generable processes, meaning those which, at any trace, either branch finitely or diverge (and see the discussion in~\cite{Ros98}).

Our initial division of our selection of CSP operators into constructors and deconstructors was natural, although it turned out that a somewhat different division, with `restricted' constructors, resulted in what seemed to be a better analysis (we were not able to rule out the possibility that there are alternative, indirect, definitions of the deconstructors with the original choice of constructors).  One of these restricted constructors was a deterministic choice operator making use of the divergence constant $\Omega$. There should surely, however, also be a development without divergence that allows the interpretation of the combination of CSP and functional programming. 

We were, however, not able to do this using $\mathrm{CSP}(|)$:
the free algebra does not seem to support a suitable definition of concealment, whether defined directly or via a homomorphism. For example
a straightforward extension of the homomorphic treatment of concealment in the case of the initial algebra
(cf.\ Section~\ref{CSPconstructors2}) would give
$$(a.\underline{x} \boxempty b.\mystop) \backslash a = \underline{x} \sqcap (\underline{x} \boxempty b.\mystop)$$
However, our approach requires the right-hand side to be equivalent to
a term built from constructors only, but no natural candidates came
forward---all choices that came to mind lead to unwanted identifications.

We conjecture that, taking instead, as constructor, a {\em mixed-choice} operator of the form:
\[\bigbox_{i}\alpha_{i}.x_{i}\]
where each $\alpha_{i}$ is either an action or $\tau$, would lead to a satisfactory theory. This new operator is given by the equation:
\[\bigbox_{i}\alpha_{i}.x_{i} = \bigsqcap_{\alpha_{i} = \tau}x_{i} \;\sqcap 
\left(\bigbox_{\alpha_{i} = \tau}x_{i}\; \boxempty 
\bigbox_{\alpha_{i} \neq \tau}\alpha_{i}.x_{i} \right)\] 
and there is a homomorphic relationship with concealment:
\[(\bigbox_{i}\alpha_{i}.x_{i})\backslash a = \bigbox_{i}(\alpha_{i}\backslash a).(x_{i}\backslash a)\]
(with the evident understanding of $\alpha_{i}\backslash a$). Note that in the stable failures model we have the equation:
\[\bigbox_{i}\alpha_{i}.x_{i} = \bigsqcap_{\alpha_{i} = \tau}x_{i} \;\sqcap 
\left(\Omega \boxempty 
\bigbox_{\alpha_{i} \neq \tau}\alpha_{i}.x_{i} \right)\] 
which is presumably why the deterministic choice operator available in the presence of $\Omega$ played so central a r\^{o}le there.

In a different direction, one might  also ask if there is some problem if we alternatively take an extended set of operators as constructors. For example, why not add relabelling with its equations to the axioms? As the axioms inductively determine relabelling on the {\finitary} refusal sets model, that would still be the initial algebra, and the same holds if we add any of the other operators we have taken as deconstructors. 

However, the $X$-refusal sets would not longer be the free algebra, as there would be extra elements, such as $f(x)$ for $x \in X$, where $f$ is a relabelling function.
We would also get some undesired equations holding between terms of the computational $\lambda$-calculus. For any $n$-ary constructor $\mathrm{op}$ and evaluation context $E[-]$, one has in the monadic semantics: 
\[E[\mathrm{op}(M_1,\ldots,M_n)] = \mathrm{op}(E[M_{1}],\ldots, E[M_{n}])\]
So one would have
$E[f(M)] = f(E[M])$ if one took relabelling as a constructor, and, as another example,
one would have  $E[M \mypar N] = E[M] \mypar E[N]$ if one took the concurrency operator as a constructor.
   
It will be clear to the reader that, in principle, one can investigate other process calculi and their combination with functional programming in a similar way. For example for Milner's CCS~\cite{Mil80} one could take action prefix (with names, conames and $\tau$) together with $\prog{NIL}$ and the sum operator as constructors, and as axioms that we have a semilattice with a zero, for strong bisimulation,  together with the usual $\tau$-laws, if we additionally wish to consider weak bisimulation. The deconstructors would be renaming, hiding, and parallel, and all should have suitable polymorphic versions in the functional programming context. Other process calculi such as the $\pi$-calculus~\cite{DW03,Sta08}, or even the stochastic $\pi$-calculus~\cite{Pri95,KS08}, 
might be dealt with similarly. In much the same way, one could combine parallelism with a global store with functional programming, following the algebraic account of the resumptions monad~\cite{HPP06,AP09} where the constructors are  the two standard ones for global store~\cite{PP02}, a nondeterministic choice operation, and a unary `suspension'  operation.

A well-known feature of the monadic approach~\cite{HPP06} is that it is often possible to combine different effects in a modular way. For example, the global side-effects monad is $(S \times -)^{S}$ where $S$ is a suitable set of states. A common combination of it with another monad $T$ is the monad $T(S \times -)^{S}$. So, taking $T = T_{\mathrm{CSP}(\mid)}$, for example,  we get a combination of CSP with global side-effects.  

As another example, given a monoid $M$, one has the $M$-action monad $M \times -$ which supports a unary $M$-action effect constructor $m.-$, parameterised by elements $m$ of the monoid. One might use this monad to model the passage of time, taking $M$ to be, for example, the monoid of the natural numbers $\bbbn$ under addition. A suitable combination of this monad with ones for CSP may yield  helpful analyses of timed CSP~\cite{RR99,OS06}, with $\mathit{Wait}\;  n; - $ given by the $\bbbn$-action effect constructor. 
We therefore have a very rich space of possible combinations of process calculi, functional programming and other effects, and we hope that some of these  prove useful.

Finally, we note that there is no general account of how the equations
used in the algebraic theory of effects arise.
In such cases as global state, nondeterminism or probability, there are natural axioms and monads already available, and it is encouraging that the two are equivalent~\cite{PP02,HPP06}.
One could investigate using  operational methods and behavioural equivalences to determine the equations, and it would be interesting to do so. Another approach is the use of `test algebras'~\cite{SS06,KP09}. In the case of process calculi one naturally uses operational methods; however the resulting axioms may not be very modular, or very natural mathematically, and, all in all, in this respect the situation is not satisfactory.


\bibliographystyle{plain}

\begin{thebibliography}{BHM02}

\bibitem[AP09]{AP09} M.~Abadi \& G.~D.~Plotkin,
\newblock A model of cooperative threads,
\newblock \emph{Proc.\  POPL 2009}  (eds.\ Z.~Shao \& B.~C.~Pierce), ACM Press, 29--40, 2009.

\bibitem[AGM95]{AGM95} S.~Abramsky, D.~M.~Gabbay \& T.~S.~E.~Maibaum (eds),
\newblock \emph{Handbook of Logic in Computer Science (Vol.\ 1), Background: Mathematical Structures},
\newblock Oxford University Press, 1995.

\bibitem[BHM02]{BHM02}
N.~Benton, J.~Hughes \&  E.~Moggi,
\newblock Monads and effects,
\newblock {\em Proc.\ APPSEM 2000}, 
LNCS {\bf  2395}, 42--122, Springer, 2002.

\bibitem[BK85]{BK85} J.~A.~Bergstra \& J.~W.~Klop,
\newblock Algebra of communicating processes with abstraction,
\newblock \emph{Theor.\ Comput.\ Sci.} {\bf 37}, 77--121, 1985.

\bibitem[BK86]{BK86} J.~A.~Bergstra \& J.~W.~Klop,
\newblock Algebra of communicating processes,
\newblock \emph{Proc.\ of the CWI Symp.\ Math.\ and Comp.\ Sci.} (eds. J.~W.~de~Bakker, M.~Hazewinkel \& J.~K.~Lenstra), 89--138, North-Holland, 1986.

\bibitem[BKO87]{BKO87} J.~A.~Bergstra, J.~W.~Klop, \& E.-R. Olderog,
\newblock Failures without chaos: a new process semantics for fair abstraction.
\newblock \emph{Proc.\ of the $3^{th}$ IFIP WG 2.2 working conference
  on Formal Description of Programming Concepts} (ed. M.~Wirsing), 77--103, North-Holland, 1987.
  
\bibitem[Blo76]{Blo76}  S.~L.~Bloom, 
\newblock Varieties of ordered algebras, 
\newblock {\em J.\  Comput.\  Syst.\  Sci.}, {\bf 13}(2), 200--212, 1976.

\bibitem[Bor94]{Bor94} F.~Borceux,
\newblock \emph{ Handbook of Categorical Algebra 2}, 
\newblock Encyclopedia of Mathematics and its Applications {\bf 51}, Cambridge University Press, 1994.

\bibitem[BHR84]{BHR84} S.~D.~Brookes, C.~A.~R.~Hoare \& A.~W.~Roscoe,
\newblock A theory of communicating sequential processes,
\newblock \emph{J.\ ACM} {\bf 31}(3),  560--599, 1984.

\bibitem[DeN85]{DeN85} R.~De~Nicola,
\newblock {Two complete axiom systems for a theory of communicating sequential processes},
\newblock \emph{Information and Control}  {\bf 64}, 136--172, 1985.
 
 \bibitem[EFG08]{EFG08}
 K.~Ebrahimi-Fard \& L.~Guo,
 \newblock Rota-Baxter algebras and dendriform algebras,
 \newblock  \emph{Journal of Pure and Applied Algebra} {\bf 212} (2), 320--33, 2008.
 
 \bibitem[GHK03]{GHK03}
 G.~Gierz, K.~H.~Hofmann, K.~Keimel, J.~D.~Lawson, M.~Mislove \& D.~S.~Scott,
\newblock  {\em Continuous Lattices and Domains}, Encyclopedia of Mathematics and its Applications {\bf 93},
\newblock Cambridge University Press, 2003.

 \bibitem[Hoa85]{Hoa85}
C.~A.~R.~Hoare,
\newblock \emph{Communicating Sequential Processes}, Prentice-Hall, 1985. 

\bibitem[HPP06]{HPP06}
J.~M.~E.~Hyland, G.~D.~Plotkin \&  A.~J.~Power,
\newblock Combining effects: sum and tensor,
\newblock Clifford Lectures and the Mathematical Foundations of Programming Semantics,
(eds.\ S.~Artemov and M.~Mislove),
\newblock \emph{Theor.\ Comput.\ Sci.}  {\bf 357}(1--3), 70--99, 2006.

\bibitem[KP09]{KP09} K.~Keimel \& G.~D.~Plotkin,
\newblock Predicate transformers for extended probability and non-determinism,
\newblock \emph{Mathematical Structures in Computer Science} {\bf 19}(3),  501--539, Cambridge University Press, 2009.

\bibitem[KS08]{KS08} B.~Klin \& V.~Sassone,
\newblock Structural operational semantics for stochastic process calculi,
\newblock \emph{Proc.\ 11th.\ FoSSaCS} (ed.\ R.~M.~Amadio), LNCS {\bf 4962}, 428--442, Springer, 2008.

\bibitem[Mil80]{Mil80} A.~J.~R.~G.~Milner,
\newblock \emph{A Calculus of Communicating Systems}, 
\newblock Springer, 1980.
 
\bibitem[Mog89]{Mog89}
E.~Moggi,
\newblock Computational lambda-calculus and monads,
\newblock    {\em Proc.\ 3rd.\ LICS},  14--23,
\newblock  IEEE Press, 1989.

\bibitem[Mog91]{Mog91} 
E.~Moggi, 
\newblock Notions of computation and monads,
\newblock {\em Inf.\ \& Comp.} {\bf 93}(1), 55--92, 1991.

\bibitem[OS06]{OS06}J.~Ouaknine \& S.~Schneider,
\newblock Timed CSP: a retrospective,
\newblock {\em  Proceedings of the Workshop ``Essays on Algebraic Process Calculi'' (APC 25)}, 
Electr.\ Notes Theor.\ Comput.\ Sci., {\bf 162}, 273--276, 2006. 

\bibitem[Plo06]{Plo06}
G.~D.~Plotkin,
\newblock Some varieties of equational logic,
\newblock {\em Essays Dedicated to Joseph A. Goguen} (eds.\ K.~Futatsugi, J.-P.~Jouannaud \& J.~Meseguer), 
\newblock {LNCS} {\bf 4060}, 150--156, Springer, 2006.

\bibitem[PP02]{PP02} 
G.~D.~Plotkin \& A.~J.~Power,
\newblock Notions of computation determine monads,
\newblock {\em Proc.\ 5th.\ FOSSACS}, 
\newblock LNCS {\bf  2303}, 342--356, Springer, 2002.

\bibitem[PP04]{PP04} G.~D.~Plotkin \& A.~J.~Power, 
\newblock Computational effects and operations: an overview,
\newblock {\em Proc.\ Workshop on Domains VI} (eds.\ M.~Escard\'{o} and A.~Jung), 
 \emph{Electr.\ Notes Theor.\ Comput.\ Sci.} {\bf 73}, 149--163, Elsevier, 2004.

\bibitem[PPr08]{PPr08} G.~D.~Plotkin \& M.~Pretnar,
\newblock A logic for algebraic effects, {\em Proc.\  23rd.\ LICS}, 118-129, IEEE Press, 2008.

\bibitem[PPr09]{PPr09} G.~D.~Plotkin \& M.~Pretnar,
\newblock Handlers of algebraic effects, \emph{Proc.\ 18th.\ ESOP}, 80--94, 2009.

\bibitem[Pri95]{Pri95} C.~Priami,
\newblock Stochastic pi-calculus, 
\newblock \emph{Comput.\ J.} {\bf 38} (7), 578--589, 1995.

\bibitem[RR99]{RR99} 
G.~M.~Reed \& A.~W.~Roscoe,
\newblock The timed failures-stability model for CSP,
\newblock {\em Theor.\ Comput.\ Sci.} {\bf 211} (1-2), 85-127, 1999.

\bibitem[Ros94]{Ros94} A.~W.~Roscoe, 
\newblock Model-checking CSP, 
\newblock {\em A Classical Mind: Essays in
Honour of C.~A.~R.~Hoare} (ed.\ A.~W.~Roscoe), 353--337, Prentice-Hall, 1994.

\bibitem[Ros98]{Ros98} A.~W.~Roscoe, 
\newblock {\em The Theory and Practice of Concurrency}, 
\newblock Prentice Hall, 1998.

\bibitem[SW03]{DW03}  D.~Sangiorgi \& D.~Walker,
\newblock \emph{The $\pi$-Calculus: A Theory of Mobile Processes},
\newblock Cambridge University Press, 2003.

\bibitem[Sca]{Sca} B.~Scattergood,
\newblock {\em The Semantics and Implementation of 
Machine-Readable CSP}, 
\newblock D.Phil Thesis, Oxford University, 1998.

\bibitem[SS06]{SS06} M.~Schr\"{o}der \& A.~Simpson,
\newblock Probabilistic observations and valuations (extended abstract),
\newblock \emph{Electr.\ Notes Theor.\ Comput.\ Sci.}  {\bf 155}, 605--615, 2006.

\bibitem[Sta08]{Sta08} I.~Stark,
\newblock  Free-algebra models for the pi -calculus,
\newblock \emph{Theor.\ Comput.\ Sci.}  {\bf 390}(2-3), 248--270, 2008.
\end{thebibliography}


\section*{Appendix: The computational $\lambda$-calculus}\label{comp}

In this appendix, we sketch (a slight variant of) the syntax and semantics of Moggi's computational $\lambda$-calculus, or
$\lambda_c$-calculus~\cite{Mog89,Mog91}.
It has types given by:
\[
{  \sigma ::= \prog{b} \mid  \prog{unit}\mid  \sigma \times \sigma \mid \prog{empty}  \mid \sigma 
\rightarrow \sigma }
\]
where $\prog{b}$ ranges over a given set of { base types}, e.g., $\prog{nat}$;
the type construction ${  T\sigma }$ may be
defined to be ${\prog{unit}  \rightarrow \sigma }$.
The terms of the $\lambda_c$-calculus are given by:
\[
M ::=  x\mid g(M)  \mid \ast \mid \prog{in} M \mid (M,M) \mid\prog{fst} M\mid \prog{snd} M \mid \lambda x\type \sigma. M \mid MM  
\]
where $g$ ranges over given unary function symbols of  given types $\sigma \rightarrow \tau$, such as $0 \type \prog{unit} \rightarrow \prog{nat}$ or
$\prog{succ} \type \prog{nat} \rightarrow \prog{nat}$, if we want the natural numbers, or $\mathrm{op}\type T(\sigma) \times \ldots \times T(\sigma) \rightarrow T(\sigma)$ for some operation symbol from a theory for which $T$ is the free algebra monad. There are standard notions of free and bound variables and of closed terms and substitution;
there are also standard typing rules for judgements $\Gamma \vdash M \type \sigma$, that the term 
$M$ has type $\sigma$ in the context $\Gamma$ (contexts have the form 
$\Gamma = x_1:\sigma_1, \ldots, x_n:\sigma_n$), including:
\[\frac{\Gamma \vdash M \type \prog{empty}}{\Gamma \vdash \prog{in} M \type \sigma}\]

A $\lambda_c$-model (on the category of sets---Moggi worked more generally) consists of a monad $T$, together with enough information to interpret basic types and the given function symbols.
So there is a given set $\dsem{b}$ to interpret each basic type $b$,
and then every type $\sigma$ receives an interpretation as a set
$\dsem{\sigma}$; for example $\dsem{\prog{empty}} = \emptyset$.
There is also given a map 
$\dsem{\sigma}\rightarrow T(\dsem{\tau})$ to interpret every given unary function symbol $g\type \sigma \rightarrow \tau$. 
A term $\Gamma \vdash M \type \sigma $ of type ${\sigma}$ in context $\Gamma $ is modelled
by a map $\dsem{M} \type \lsem{  \Gamma }\rsem \rightarrow T\lsem{  \sigma}\rsem$
(where  
${\dsem{x_1\type \sigma_1, \ldots, x_n\type \sigma_n} = \dsem{\sigma_1} \times \ldots \times  \dsem{\sigma_n}}$).
For example, if  $\Gamma \vdash \prog{in} M \type \sigma $ then $\dsem{\prog{in} M} = \comp{0_{\dsem{\sigma}}}{\dsem{M}}$
(where, for any set $X$, $0_{X}$ is the unique map from $\emptyset$ to $X$).

We define values and evaluation contexts. Values can be thought of  as (syntax for) completed computations, and are defined by:
\[ V ::=  x \mid \ast \mid (V,V) \mid \prog{in} V \mid \lambda x\type \sigma. M 
\]
together with clauses such as:
\[V ::=  0 \mid \prog{succ} (V)\]
depending on the choice of basic types and given function symbols.
We may then  define evaluation contexts by:
\[
E ::= [-] \mid \prog{in} E \mid (E,M) \mid (V,E) \mid EM \mid VE \mid \prog{fst}(E) \mid \prog{snd}(E)
\]
together with clauses such as:
\[E ::=  \prog{succ} (E)\]
depending on the choice of basic types and given function symbols.
We write $E[M]$ for the term obtained by replacing the `hole' $[-]$ in an evaluation term $E$ by a term $M$.
The computational thought behind evaluation 
contexts is that in a program of the form ${  E[M]}$, 
the first computational step 
arises within $M$.  
\end{document}